\newtheorem{theorem}{Theorem}
\newtheorem*{remark}{Remark}
\DeclareMathOperator*{\argmax}{argmax} 
\algrenewcommand\algorithmicrequire{\textbf{Input:}}
\algrenewcommand\algorithmicensure{\textbf{Output:}}
\begin{document}

\title{Optimal Divisible Load Scheduling for Resource-Sharing Network}

\author{Fei~Wu,
        Yang~Cao,
        and~Thomas~Robertazzi,~\IEEEmembership{Fellow,~IEEE}
\IEEEcompsocitemizethanks{\IEEEcompsocthanksitem Fei Wu, Yang Cao and T. Robertazzi are with the Department
of Electrical and Computer Engineering, Stony Brook Univesity, Stony Brook,
NY, 11794.\protect\\
E-mail: \{fei.wu,yang.cao,thomas.robertazzi\}@stonybrook.edu}}
\IEEEtitleabstractindextext{
\begin{abstract}
Scheduling is an important task allowing parallel systems to perform efficiently and reliably. For modern computation systems, divisible load is a special type of data which can be divided into arbitrary sizes and independently processed in parallel. Such loads are commonly encountered in applications which are processing a great amount of similar data units. For a multi-task processor, the processor's speed may be time-varying due to the arrival and departure of other background jobs. This paper studies an optimal divisible loads scheduling problem on a single level tree network, whose processing speeds and channel speeds are time-varying. Two recursive algorithms are provided to solve this problem when the arrival and departure times of the background jobs are known a priori and an iterative algorithm is provided to solve the case where such times are not known. Numerical tests and evaluations are performed for these three algorithms under different numbers of background jobs and processors.
\end{abstract}

\begin{IEEEkeywords}
Divisible load, scheduling, single level tree network, multi-task processors, resource-sharing, virtualization, time-varying system.
\end{IEEEkeywords}}

\maketitle
\IEEEdisplaynontitleabstractindextext
\IEEEpeerreviewmaketitle

\IEEEraisesectionheading{\section{Introduction}\label{sec:introduction}}
\subsection{Background}
\IEEEPARstart{T}{he} design of efficient load scheduling algorithms has long been a pivotal concern in parallel processing applications. A parallel system refers to all classes of parallel computers from multicore CPUs to wide area computational grids comprising distributed and heterogeneous installations owned by mutually unrelated institutions [1]. A schedule is an assignment of tasks to processors in time. Parallel systems cannot be fully utilized if the applications are not properly scheduled. In modern networked systems, scheduling becomes more crucial due to the increasing prevalence of data-intensive computing. To deal with the large amount of data in modern computation system, divisible load theory (DLT) has emerged as a potential tool.\par
DLT assumes that computation and communication loads can be divided into parts of arbitrary sizes, which can be processed independently in parallel [2]. There are two assumptions for the loads in DLT. First is arbitrary divisibility and second is independence of execution. Loads may be divisible in fact or as an approximation. Such loads are commonly encountered in applications which are processing great amount of similar data units, such as image processing, signal processing, processing of massive experimental data, and so on [3]. In classic DLT models, there is usually a control processor holding all the data originally and then one can distribute such loads to several processors. The main problem is to decide the optimal schedule of loads distribution to the processors to achieve the minimum solution time. Many DLT applications allow users to model the parallel system with linear equations or recursion, which can be solved efficiently.\par
Analysis in DLT was first studied by Cheng and Robertazzi in [4], which was designed originally for intelligent sensor networks. The formal proof of the DLT optimality principle was in [5], where a linear daisy chain network was applied. Since then, DLT has been well established and used in many scheduling problems. An analytic proof for a bus network that all processors must stop computing at the same time to obtain a minimal time solution was provided in [6]. In [7], optimal load distribution sequences for tree networks were investigated and in [8] computing cost was considered along with job finishing time. Closed-form expression for the processing time in the nonblocking mode of communication was derived in [9].  Scheduling divisible loads in a single-level tree network was considered in [9-14]. An optimal time-varying load scheduling for divisible loads was studied in [15], where the computing system was modeled as a bus-oriented network.\par
Most previous works assume that the channel speed and processing speed are constant throughout the whole processing time. It is often assumed that one processor can only process a single job at a time, which may not be true since in most practical computer systems one processor can both communicate with multiple networks and process multiple jobs. Such multi-task processors are commonly encountered in resource-sharing systems such as virtualized networks. In Wireless Sensor Networks (WSN) the same piece of WSN's physical resources can be virtualized into logical units, which can be used by multiple users [24]. Also, in the network-slicing technology for 5G networks, resource sharing among slices is sometimes permitted in order to maintain certain performance levels [25]. As a result, in such resource sharing systems those extra connections and jobs will take up the system resources and hence hinder the system processing of a specific job of our interest. In other word, the system speed may be time-varying according to the number of those extra loads. In this paper, the extra jobs running on a certain processor in addition to the job of our interest are called background jobs. For the area of time-varying scheduling studies, a data gathering problem is discussed in [23] where only data transmission is considered and the communication speed is time-varying. 
\subsection{Our Contribution}
The first work examining time-varying DLT is in [15] by Sohn and Robertazzi, where the loads are distributed through a bus network. The control processor, however, does not process data. In [15], the arrivals and departures times of background jobs are assumed to be exactly same for every processor, which is usually not true in practical situations. In our paper, each processor has its own background arrival and departure sequence, which is independent from others. Also, the processor sharing rule is updated in our paper. Instead as the processor (channel) devoting all its computational (transmission) power evenly to each job in [15], we assume that the processor (the channel) can assign an arbitrary ratio of its computational power (transmission power) to each job, as long as the sum of these ratio does not exceed one. Such an assumption is more realistic since modern virtualization technique allows users to divide the processor's computational (transmission) power according to their preference when a single physical processor is virtualized into multiple virtual processors. \par
Furthermore, a single level tree network with heterogeneous channels is used instead of the bus network at [15]. The single level tree network can model a variety of parallel systems using master-slave, or controller-worker paradigm. For instance, [16] models the case where several computers interconnected with an Ethernet as a single level tree network. Moreover, in [13] a single level tree network can be modeled as a set of computing clusters connected to a master controller via Internet. Moreover, this paper provides two algorithms for the stochastic analysis, which delivers superior performance compared to the one in [15].\par 
Also, in this paper, unlike [15], the control processor is equipped with a front-end sub-processor, which means it not only transfers data to other processors, but also processes data as well.\par 
Our objective is to determine the optimal partitions of the full load for each processor to achieve the minimum finishing time (makespan). Two cases are discussed in our paper: whether the control processor is a time-invariant processor or a time-varying processor, where the former one is a special case and latter one is more general. We first studied the deterministic model where the arrival and departure time points for the background jobs and extra connections are exactly known a priori. Two algorithms are provided for the two cases to solve the scheduling problem. Then a stochastic analysis is performed when those time points are not known a priori.\par
\subsection{Organization}
The rest of this paper is organized as follows. Section 2 first briefly introduces the classic solution of DLT scheduling problem in a time-invariant single level tree network. Then two time-varying cases are studied, respectively. The stochastic model is studied in section 3 and section 4 provides verification and evaluation of our method via different criterion. The conclusion appears in section 5.\par
The following notations are used in this paper:
\begin{enumerate} [leftmargin=1em,align=left]
	\item[$\alpha_{i}$] The partition of the entire divisible load that is assigned to processor $i$.
	\item[$W_{i}$] Inverse of processing speed of $ith$ processor when there is only one job.
	\item[$W_{i}(t)$] Inverse of time-varying processing speed of $ith$ processor applied to the divisible job at interest.
	\item[$\bar{W}_{i}$] Equivalent constant value of $W_{i}(t)$ during the processing time.
	\item[$T_{cp}$] Time to process the entire load when $W_{i} = 1$ for the $ith$ processor.
	\item[$Z_i$] Inverse of channel speed when control processor is only communicating with $ith$ processor.
	\item[$Z_{i}(t)$]  Inverse of time-varying channel speed applied to the divisible job at interest.
	\item[$\bar{Z}_{i}$] Equivalent constant value of $Z_{i}(t)$ when control processor is communicating with $ith$ worker processor
	\item[$Exp(\lambda)$] Negative exponential distribution with parameter $\lambda$.
	\item[$Unif(a,b)$] Uniform distribution with parameter $a,b$.
	\item[$T_{cm}$] Time to transmit the entire load when $Z = 1$.
	\item[$T_{f}$] The finishing time of processing the entire load.
\end{enumerate}
\section{Deterministic Analysis}
In this section, we assume that the arrival and departure times of background jobs are exactly known, which is referred as the deterministic model. We study the optimal scheduling for a time-varying single level tree system. To this end, we first briefly introduce the classic time-invariant problem, which will be helpful to understand the time-varying problem. The case that the exact arrival and departure times of background jobs are not known will be studied in the next section.
\subsection{Preliminaries}
Let's consider the single level tree network in Fig. \ref{fig:singleleveltree}. Assume that there are totally $N+1$ processors for the whole system.
\begin{figure}[h!]
	\centering
	\begin{tikzpicture}
	[processor/.style = {circle,draw,inner sep=0pt,minimum size=6mm}]
	\node[processor] (p0) at (0,1.5) {$P_{0}$};
	\node[processor] (p1) at (-1.5,-0.5) {$P_{1}$};
	\node[processor] (p2) at (-0.2,-0.5) {$P_{2}$};
	\node[processor] (p3) at (1.5,-0.5) {$P_{N}$};
	\draw [->] (p0) to (p1);
	\draw [->] (p0) to (p2);
	\draw [->] (p0) to (p3);
	\filldraw [black] (0.5,-0.5) circle [radius=0.5pt];
	\filldraw [black] (0.6,-0.5) circle [radius=0.5pt];
	\filldraw [black] (0.7,-0.5) circle [radius=0.5pt];
	\end{tikzpicture}
	\caption{Single level tree network} 
	\label{fig:singleleveltree}
\end{figure}
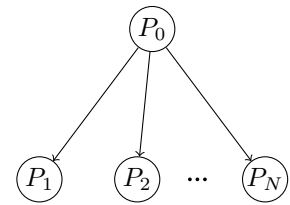
The processor $P_{0}$ is the control processor where the divisible load first arrived at. The control processor $P_{0}$ divides the divisible load to $N$ parts which is indicated by $\alpha_{1}, \alpha_{2},....,\alpha_{N}$ and assigns those $N$ parts to the worker processors $P_{1},P_{2},....,P_{N}$. In this paper we normalize the total amount of loads to be $1$, which means that $\alpha_{1} + \alpha_{2} + .... + \alpha_{N} = 1$. The worker processors are numbered in the order of receiving the loads. There are several assumptions for the processor:
\begin{itemize}
	\item A processor can only compute after it has finished the communication unless it is equipped with a front-end processor.
	\item The control processor can only communicate with one worker processor at a time (sequential load distribution).
	\item There is no communication between the worker processors.
\end{itemize}
In this case, we assume that due to a limitation of resources, only the control processor has a front-end processor, which means that it can compute at the same time as it communicates with other worker processors. According to the notation we define in section 1, the piece of load $\alpha_{i}$ is transferred to worker processor $P_{i}$ in time $\alpha_{i}ZT_{cm}$ and is processed in time $\alpha_{i}W_{i}T_{cp}$. All the processors should finish computing at the same moment to achieve the smallest $T_{f}$ by the optimality principle proved in [4,13,17-20]. Our problem is to find the load partitions $\alpha_{1}, \alpha_{2},....,\alpha_{N}$ when the optimality principle is achieved.\par
We can draw the timing diagram according to those conditions in Fig. \ref{fig:timeinvariant}.
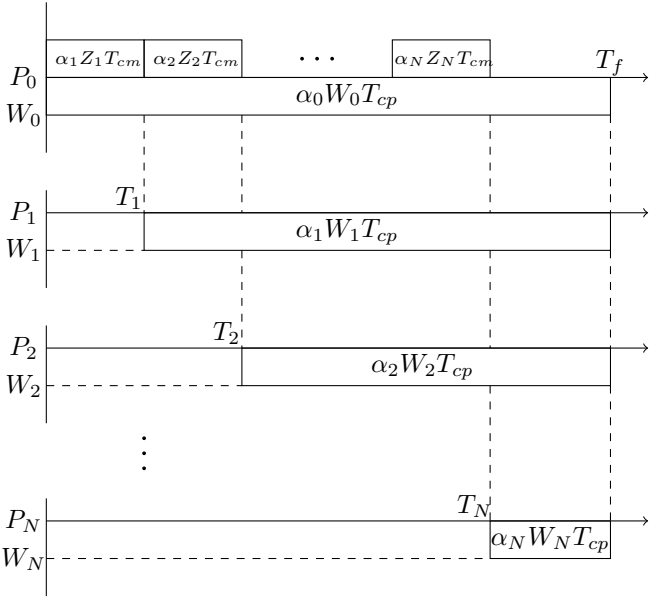
\begin{figure}[h!]
	\centering
	\begin{tikzpicture}
	\draw[->] (-4.5,3) -- (3.5,3);
	\draw (-4.5,2) -- (-4.5,4);
	\draw (-4.8,3) node {$P_{0}$};
	\draw (-4.5,3) rectangle (-3.2,3.5);
	\draw (-3.8,3.25) node {\scriptsize $\alpha_{1}Z_{1}T_{cm}$};
	\draw (-3.2,3) rectangle (-1.9,3.5);
	\draw (-2.5,3.25) node {\scriptsize $\alpha_{2}Z_{2}T_{cm}$};
	\draw (0.1,3) rectangle (1.4,3.5);
	\draw (0.8,3.25) node {\scriptsize $\alpha_{N}Z_{N}T_{cm}$};
	\filldraw [black] (-1.1,3.25) circle [radius=0.5pt];
	\filldraw [black] (-0.9,3.25) circle [radius=0.5pt];
	\filldraw [black] (-0.7,3.25) circle [radius=0.5pt];
	\draw (-4.5,3) rectangle (3,2.5);
	\draw (-4.8,2.5) node {$W_{0}$};
	\draw (-0.5,2.75) node {$\alpha_{0}W_{0}T_{cp}$};
	\draw[->] (-4.5,1.2) -- (3.5,1.2);
	\draw (-4.5,0.2) -- (-4.5,1.5);
	\draw (-4.8,1.2) node {$P_{1}$};
	\draw (-3.2,1.2) rectangle (3,0.7);
	\draw[dashed] (-4.5,0.7) -- (-3.2,0.7);
	\draw (-4.8,0.7) node {$W_{1}$};
	\draw (-0.5,0.95) node {$\alpha_{1}W_{1}T_{cp}$};
	\draw[dashed] (-3.2,2.5) -- (-3.2,1.2);
	\draw (-3.4,1.4) node {$T_{1}$};
	\draw[->] (-4.5,-0.6) -- (3.5,-0.6);
	\draw (-4.5,-1.6) -- (-4.5,-0.3);
	\draw (-4.8,-0.6) node {$P_{2}$};
	\draw (-1.9,-0.6) rectangle (3,-1.1);
	\draw[dashed] (-4.5,-1.1) -- (-1.9,-1.1);
	\draw (-4.8,-1.1) node {$W_{2}$};
	\draw (0.5,-0.85) node {$\alpha_{2}W_{2}T_{cp}$};
	\draw[dashed] (-1.9,2.5) -- (-1.9,1.2);
	\draw[dashed] (-1.9,0.7) -- (-1.9,-0.6);	
	\draw (-2.1,-0.4)node {$T_{2}$};
	\filldraw [black] (-3.2,-1.8) circle [radius=0.5pt];
	\filldraw [black] (-3.2,-2.0) circle [radius=0.5pt];
	\filldraw [black] (-3.2,-2.2) circle [radius=0.5pt];
	\draw[->] (-4.5,-2.9) -- (3.5,-2.9);
	\draw (-4.5,-3.9) -- (-4.5,-2.6);
	\draw (-4.8,-2.9) node {$P_{N}$};
	\draw (1.4,-2.9) rectangle (3,-3.4);
	\draw[dashed] (-4.5,-3.4) -- (1.4,-3.4);
	\draw (-4.8,-3.4)node {$W_{N}$};
	\draw (2.2,-3.15) node {$\alpha_{N}W_{N}T_{cp}$};
	\draw[dashed] (1.4,2.5) -- (1.4,1.2);
	\draw[dashed] (1.4,0.7) -- (1.4,-0.6);
	\draw[dashed] (1.4,-1.1) -- (1.4,-2.9);
	\draw(1.2,-2.7)node {$T_{N}$};
	\draw[dashed] (3,2.5) -- (3,1.2);
	\draw[dashed] (3,0.7) -- (3,-0.6);
	\draw[dashed] (3,-1.1) -- (3,-2.9);
	\draw(3,3.2)node {$T_{f}$};
	\end{tikzpicture}
	\caption{Timing diagram for single level tree network} 
	\label{fig:timeinvariant}
\end{figure}
For each time axis in the timing diagram, communication appears above the axis while computing appears below the axis. At $t=0$, the control processor starts sending partition $\alpha_{1}$ to worker $P_{1}$ in time $\alpha_{1}Z_{1}T_{cm}$. At $t=T_{1}$, after receiving the loads, $P_{1}$ starts processing and finishes in time $\alpha_{1}W_{1}T_{cp}$. This procedure repeats for every worker processor and all the processors finish computing at the same time $t = T_{f}$. The linear system equations can then be expressed as:\\
\begin{subequations}
	\begin{align}
	T_{f} = \alpha_{0}W_{0}T_{cp}\\
	T_{f} = \sum_{k = 1}^{i} \alpha_{k}Z_{k}T_{cm} + \alpha_{i}W_{i}T_{cp}, \hspace{0.2cm} i = 1,2,...N\\
	\sum_{k = 1}^{N}\alpha_{k} = 1
	\end{align}
\end{subequations}
Since there are $N+2$ unknowns and $N+2$ linear equations, load partitions $\alpha_{1}, \alpha_{2},....,\alpha_{N}$ can be uniquely solved as well as the $T_{f}$.
\subsection{Time-varying System with A Time-invariant Control Processor}
Now we consider that the processors can simultaneously process multiple jobs, which means in addition to the divisible job we studied in section 2.1, the processor also processes some other jobs. We call those jobs as background jobs. The background jobs will take the computing power from the processor and as a result the processor's processing speed will vary according to the amount of workload over time. In this section, we only consider that the worker processors are time-varying. The time-varying control processor will be discussed in the next section.\par
\begin{figure}[h!]
	\centering
	\includegraphics[width=0.5\textwidth]{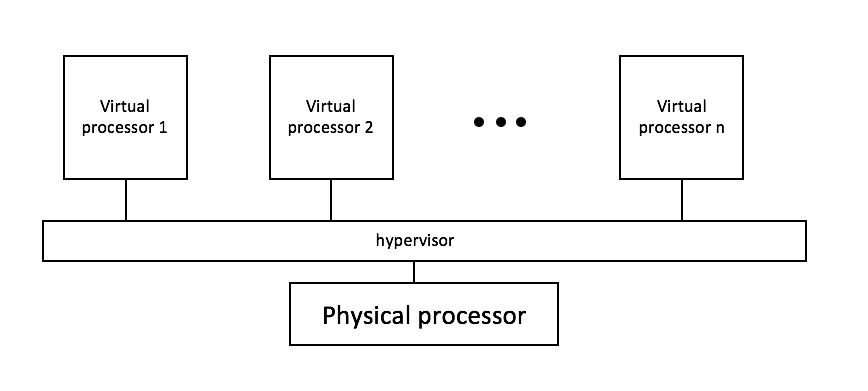}
	\caption{Processor virtualization}
	\label{fig:vir}
\end{figure}	
When a processor processes multiple jobs in parallel, the processor is virtualized into multiple virtual processors. In this way, each user of the system feels that it is the exclusive user of the processor. As shown in Fig. \ref{fig:vir}, there is a hypervisor controlling the virtualization process. The hypervisor can assign any ratio of computation power to any virtual processor, as long as the sum computation power of all virtual processors does not exceed the physical processor's maximum computation power. The protocol for the hypervisor to assign the physical processor's computation power is pre-defined in the hypervisor. As a result, the processing speed for the divisible load job of our interest is a function of the number of jobs in the processor defined by the hypervisor. For the case that $n$ jobs in the processor $i$, we use $W_{i}^{h}(n)$ to denote the inverse of computing speed applied to the divisible job of interest. This $W_{i}^{h}(n)$ is supposed to be known once $n$ is given. If $n=1$ and there is only the divisible load job in the processor, we denote the processing speed as $W_{i}$ for simplicity. We also use $W_i(t)$ to represent the general time-varying inverse of computing speed applied to the divisible job at interest and $\bar{W}_i$ to represent the equivalent constant value of $W_i(t)$ during processing for processor $i$. The background jobs arrive and leave independently on different processor. The method to define $\bar{W}_i$ will be introduced in this section. By adapting the time-varying processing speed to Fig. \ref{fig:timeinvariant}, the timing diagram for this condition can be depicted in Fig. \ref{fig:case1time}. \par
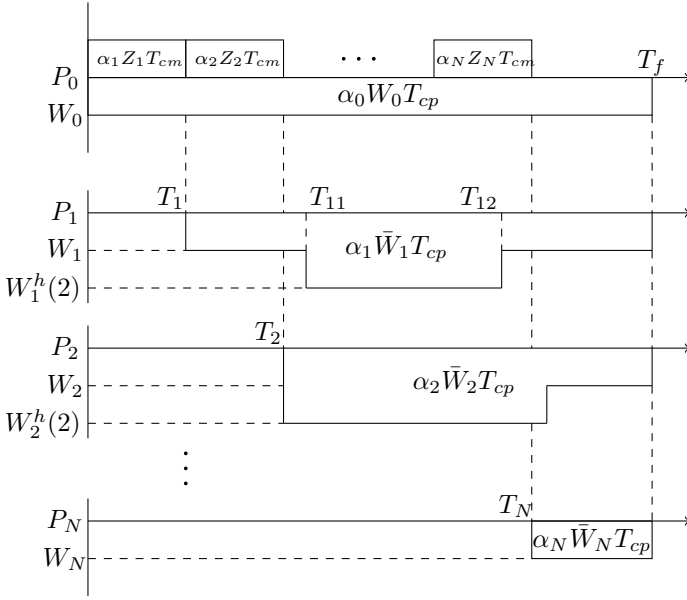
\begin{figure}[h!]
	\centering
	\begin{tikzpicture}
	\draw[->] (-4.5,3) -- (3.5,3);
	\draw (-4.5,2) -- (-4.5,4);
	\draw (-4.8,3) node {$P_{0}$};
	\draw (-4.5,3) rectangle (-3.2,3.5);
	\draw (-3.8,3.25) node {\scriptsize $\alpha_{1}Z_{1}T_{cm}$};
	\draw (-3.2,3) rectangle (-1.9,3.5);
	\draw (-2.5,3.25) node {\scriptsize $\alpha_{2}Z_{2}T_{cm}$};
	\draw (0.1,3) rectangle (1.4,3.5);
	\draw (0.8,3.25) node {\scriptsize $\alpha_{N}Z_{N}T_{cm}$};
	\filldraw [black] (-1.1,3.25) circle [radius=0.5pt];
	\filldraw [black] (-0.9,3.25) circle [radius=0.5pt];
	\filldraw [black] (-0.7,3.25) circle [radius=0.5pt];
	\draw (-4.5,3) rectangle (3,2.5);
	\draw (-0.5,2.75) node {$\alpha_{0}W_{0}T_{cp}$};
	\draw (-4.8,2.5) node {$W_{0}$};
	\draw[->] (-4.5,1.2) -- (3.5,1.2);
	\draw (-4.5,0) -- (-4.5,1.5);
	\draw (-4.8,1.2) node {$P_{1}$};
	\draw (-3.2,1.2) -- (-3.2,0.7);
	\draw (-3.2,0.7) -- (-1.6,0.7);
	\draw (-1.6,0.7) -- (-1.6,0.2);
	\draw (-1.6,0.2) -- (1,0.2);
	\draw (1,0.2) -- (1,0.7);
	\draw (1,0.7) -- (3,0.7);
	\draw (3,0.7) -- (3,1.2);
	\draw (-0.4,0.75) node {$\alpha_{1}\bar{W}_{1}T_{cp}$};
	\draw[dashed] (-3.2,2.5) -- (-3.2,1.2);
	\draw (-3.4,1.4) node {$T_{1}$};
	\draw[dashed] (-1.6,1.2) -- (-1.6,0.7);
	\draw (-1.3,1.4) node {$T_{11}$};
	\draw[dashed](1,1.2) -- (1,0.7);
	\draw (0.7,1.4) node {$T_{12}$};
	\draw[dashed] (-4.5,0.7) -- (-3.2,0.7);
	\draw (-4.8,0.7) node {$W_{1}$};
	\draw[dashed] (-4.5,0.2) -- (-1.6,0.2);
	\draw (-5.1,0.2) node {$W_{1}^{h}(2)$};
	\draw[->] (-4.5,-0.6) -- (3.5,-0.6);
	\draw (-4.5,-1.8) -- (-4.5,-0.3);
	\draw (-4.8,-0.6) node {$P_{2}$};
	\draw (-1.9,-0.6) -- (-1.9,-1.6);
	\draw (-1.9,-1.6) -- (1.6,-1.6);
	\draw (1.6,-1.6) -- (1.6,-1.1);
	\draw (1.6,-1.1) -- (3,-1.1);
	\draw (3,-1.1) -- (3,-0.6);
	\draw (0.5,-1.05) node {$\alpha_{2}\bar{W}_{2}T_{cp}$};
	\draw[dashed] (-4.5,-1.1) -- (-1.9,-1.1);
	\draw (-4.8,-1.1) node {$W_{2}$};
	\draw[dashed] (-4.5,-1.6) -- (-1.9,-1.6);
	\draw (-5.1,-1.6) node {$W_{2}^{h}(2)$};
	\draw[dashed] (-1.9,2.5) -- (-1.9,1.2);
	\draw[dashed] (-1.9,0.7) -- (-1.9,-0.6);	
	\draw (-2.1,-0.4)node {$T_{2}$};
	\filldraw [black] (-3.2,-2.0) circle [radius=0.5pt];
	\filldraw [black] (-3.2,-2.2) circle [radius=0.5pt];
	\filldraw [black] (-3.2,-2.4) circle [radius=0.5pt];
	\draw[->] (-4.5,-2.9) -- (3.5,-2.9);
	\draw (-4.5,-3.9) -- (-4.5,-2.6);
	\draw (-4.8,-2.9) node {$P_{N}$};
	\draw (1.4,-2.9) rectangle (3,-3.4);
	\draw (2.2,-3.15) node {$\alpha_{N}\bar{W}_{N}T_{cp}$};
	\draw[dashed] (-4.5,-3.4) -- (1.4,-3.4);
	\draw (-4.8,-3.4) node {$W_{N}$};
	\draw[dashed] (1.4,2.5) -- (1.4,1.2);
	\draw[dashed] (1.4,0.7) -- (1.4,-0.6);
	\draw[dashed] (1.4,-1.6) -- (1.4,-2.9);
	\draw(1.2,-2.7)node {$T_{N}$};
	\draw[dashed] (3,2.5) -- (3,1.2);
	\draw[dashed] (3,0.7) -- (3,-0.6);
	\draw[dashed] (3,-1.1) -- (3,-2.9);
	\draw(3,3.2)node {$T_{f}$};
	\end{tikzpicture}
	\caption{Timing diagram for single level tree network with time-varying worker processor speed} 
	\label{fig:case1time}
\end{figure}
In Fig. \ref{fig:case1time}, we use steps to represent the arrival and departure of the background jobs, and the value of $W(t)$ is noted on the vertical axis. For example, processor $P_1$ starts processing the data at time $T_{1}$, then one background job arrives at time $T_{11}$ where a down step appears. Note that $W$ is the inverse of processing speed, and processing speed jumps from $\frac{1}{W_{1}}$ to $\frac{1}{W_{1}^{h}(2)}$ at this time point, thus $W_{1}(t)$ jumps from $W_{1}$ to $W_{1}^{h}(2)$. Afterwards, this background job departs at time $T_{12}$ where an up step appears. $W_{1}(t)$ also jumps back from $W_{1}^{h}(2)$ to $W_{1}$. In this section we assume that the time points of arrival and departure of the background jobs are exactly known, which mean that $W_{i}(t), i=1,2,...,N$ are exactly known. \par
Theorem I shows how to achieve the $\bar{W}_{i}, i = 1,2,...,N$. 
\begin{theorem}
	The equivalent constant value of $W_{i}(t)$ during the processing of $ith$ processor equals to:
	\[\bar{W}_{i} = \frac{T_{f} - T_{i}}{\int_{T_{i}}^{T_{f}} \frac{1}{W_{i}(t)}dt}\]
	where the $T_{i}$ denotes the start time of $ith$ processor's computation and $T_{f}$ denotes the finishing time.
\end{theorem}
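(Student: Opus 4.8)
The plan is to compute the total amount of divisible load that processor $i$ actually processes under the time-varying speed, and then to demand that a fictitious processor running at a single constant inverse-speed $\bar W_i$, over the \emph{same} time window and handling the \emph{same} load, be indistinguishable from it. Equating the two load expressions will isolate $\bar W_i$ and yield the stated formula. Throughout, $T_i$ and $T_f$ are treated as prescribed quantities (the start and finish times fixed by the timing diagram of Fig. \ref{fig:case1time}); determining their actual values is the job of the scheduling equations, not of this theorem, so there is no circularity.

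First I would recover the instantaneous processing rate from the time-invariant model of Section 2.1. There, processor $i$ clears a load of size $\alpha_i$ in time $\alpha_i W_i T_{cp}$, so it consumes load at the constant rate $1/(W_i T_{cp})$. Replacing the constant $W_i$ by the time-varying inverse-speed $W_i(t)$ (locally constant on each subinterval between background-job arrivals and departures), the instantaneous rate at time $t$ is $1/(W_i(t)T_{cp})$. Since processor $i$ computes precisely on $[T_i,T_f]$, integrating this rate over the processing interval gives the total load it handles,
\[
\alpha_i = \int_{T_i}^{T_f}\frac{1}{W_i(t)T_{cp}}\,dt = \frac{1}{T_{cp}}\int_{T_i}^{T_f}\frac{1}{W_i(t)}\,dt.
\]
For the equivalent constant processor, a single inverse-speed $\bar W_i$ clears load $\alpha_i$ in elapsed time $\alpha_i\bar W_i T_{cp}$; setting this equal to the true processing duration $T_f-T_i$ gives
\[
\alpha_i = \frac{T_f-T_i}{\bar W_i T_{cp}}.
\]
Equating the two expressions for $\alpha_i$, the factor $T_{cp}$ cancels and solving for $\bar W_i$ delivers
\[
\bar W_i = \frac{T_f-T_i}{\int_{T_i}^{T_f}\frac{1}{W_i(t)}\,dt},
\]
which is the claim.

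The only genuine subtlety—and the step I would treat most carefully—is pinning down what ``equivalent'' is meant to preserve: the argument rests on the physical invariant that both the total processed load $\alpha_i$ and the processing window $[T_i,T_f]$ are held fixed when $W_i(t)$ is swapped for its constant surrogate. Once that notion of equivalence is fixed, the content of the theorem is simply that $\bar W_i$ is the time-weighted harmonic mean of $W_i(t)$ over $[T_i,T_f]$, or equivalently that the equivalent processing speed $1/\bar W_i$ is the time-average of the instantaneous speed $1/W_i(t)$; the remaining manipulation is routine. For the step-function $W_i(t)$ of Fig. \ref{fig:case1time} the integral collapses to a finite sum over the constant-speed subintervals, but no separate treatment is required, since the derivation above already holds for any integrable $W_i(t)$.
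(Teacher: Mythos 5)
Your proposal is correct and takes essentially the same approach as the paper: both arguments equate the load actually processed, $\alpha_i = \frac{1}{T_{cp}}\int_{T_i}^{T_f}\frac{1}{W_i(t)}\,dt$, with the defining relation $T_f - T_i = \alpha_i \bar{W}_i T_{cp}$ and solve for $\bar{W}_i$. The only difference is presentational: the paper builds the integral by summing the sub-loads $\alpha_{ij}$ over the piecewise-constant segments of $W_i(t)$, whereas you write down the instantaneous-rate integral directly and observe that it collapses to that same finite sum for a step function.
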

\begin{proof}
	Since the changes in $W_{i}(t)$ are all steps at certain time points where background jobs arrive and depart, let's assume that for $ith$ processor there are $k$ changes in $W_{i}(t)$ between time $T_{i}$ to $T_{f}$ and let $T_{ij}, j = 1,2,3...k$ denote the $jth$ change time point. For example in Fig. \ref{fig:case1time}, $T_{11}$ is the first time point of the change in $W_{1}(t)$ after $T_{1}$ and $T_{12}$ is the last time point of change in $W_{1}(t)$ before $T_{f}$. Let $W_{i(j+1)}$ denotes the value of $W_{i}(t)$ between $T_{ij}$ to $T_{i(j+1)}$, where $j = 1,2,...k-1$ and $W_{i1}$ between time $T_{i}$ to $T_{i1}$, $W_{i(k+1)}$ between time $T_{ik}$ to $T_{f}$. Also, in the same manner, let $\alpha_{i(j+1)}$ denotes the partition of loads that processed between time $T_{ij}$ to $T_{i(j+1)}$, where $j = 1,2,...k-1$ and $\alpha_{i1}$ between time $T_{i}$ to $T_{i1}$, $\alpha_{i(k+1)}$ between time $T_{ik}$ to $T_{f}$. Then we have the equations:
	\begin{subequations}
		\begin{align}
		T_{i1} - T_{i} = \alpha_{i1}W_{i1}T_{cp}\\
		T_{i(j+1)} - T_{ij} = \alpha_{i(j+1)}W_{i(j+1)}T_{cp}, j=1,2,..,k-1\\
		T_{f} - T_{ik} = \alpha_{i(k+1)} W_{i(k+1)}T_{cp}\\
		\alpha_{i} = \sum_{j=1}^{k+1}\alpha_{ij}
		\end{align}
	\end{subequations}
	Since by definition $T_{f} - T_{i} = \alpha_{i}\bar{W}_{i}T_{cp}$, then:
	\begin{align}
	\bar{W}_{i} = \frac{T_{f} - T_{i}}{\alpha_{i}T_{cp}}
	\end{align}
	By substituting equation (2) into equation (3), we can get:
	\begin{align*}
	\bar{W}_{i} &= \frac{T_{f} - T_{i}}{\alpha_{i}T_{cp}}\\
	&= \frac{T_{f} - T_{i}}{(\sum_{j=1}^{k+1}\alpha_{ij})T_{cp}}\\
	&= \frac{T_{f} - T_{i}}{\frac{T_{i1} - T_{i}}{W_{i1}} + \sum_{j=1}^{k-1}\frac{T_{i(j+1)} - T_{ij}}{W_{i(j+1)}} + \frac{T_{f} - T_{ik}}{W_{i(k+1)}}}\\
	&= \frac{T_{f} - T_{i}}{\int_{T_{i}}^{T_{f}} \frac{1}{W_{i}(t)}dt}
	\end{align*}
	This completes the proof of Theorem I.
\end{proof}
\begin{remark}
	The inverse of $\bar{W}_{i}$ equals to $\frac{\int_{T_{i}}^{T_{f}} \frac{1}{W_{i}(t)}dt}{T_{f} - T_{i}}$, which is the average value of $\frac{1}{W_{i}(t)}$ between $T_{i}$ to $T_{f}$. Since $W_{i}(t)$ is defined as the inverse of computation speed of the $ith$ processor, $\bar{W}_{i}$ can also be taken as the inverse of the average computing speed, which is the inverse of the average value of $\frac{1}{W_{i}(t)}$.
\end{remark}
Based on the expression of $\bar{W}_{i}$, the system equations can be written as:\\
\begin{subequations}
	\begin{align}
	\alpha_{0}W_{0}T_{cp} = T_{f}\\
	T_{i} = \sum_{k=1}^{i} \alpha_{k}Z_{k}T_{cm}, i = 1,2,...,N\\
	T_{f} - T_{i} = \alpha_{i}\bar{W}_{i}T_{cp}, i = 1,2,...,N\\
	\alpha_{0} + \alpha_{1} + ... + \alpha_{N} = 1
	\end{align}
\end{subequations}
where equation (4b) represents the communication time for each processor and equation (4a) and (4c) represent the computation time. Equation (4d) guarantees that all the partitions sum up to 1. From equation (4b), we can express $\alpha_{i}$ as a function of $T_{i-1}$ and $T_{i}$ as $\alpha_{i} = \frac{T_{i} - T_{i-1}}{Z_{i}T_{cm}}$. By substituting this transformation into equation (4c) we have:\\
\begin{subequations}
	\begin{align}
	T_f &= T_i + \frac{T_{i} - T_{i-1}}{Z_{i}T_{cm}}\bar{W}_{i}T_{cp}\\
	&= T_i + \frac{T_{i} - T_{i-1}}{Z_{i}T_{cm}}\frac{T_{f} - T_{i}}{\int_{T_{i}}^{T_{f}} \frac{1}{W_{i}(t)}dt}T_{cp}
	\end{align}
\end{subequations}
Starting from processor $1$, equation (5b) can be reduced as $T_f = T_1 + \frac{T_{1}}{Z_{1}T_{cm}}\frac{T_{f} - T_{1}}{\int_{T_{1}}^{T_{f}} \frac{1}{W_{1}(t)}dt}T_{cp}$, which is an equation of $T_{1}$ and $T_{f}$ only. Thus $T_{1}$ can be expressed as a function of $T_f$ only. By the definition $\alpha_1$ can also be expressed as a function of $T_f$ only. This provide an intuition that this problem can be solved recursively. A recursive algorithm is introduced to calculate the optimal finishing time $T_{f}$ and partitions $\alpha_{i}$ as Algorithm I.
\begin{algorithm}
	1. Express $\alpha_{0}$ as a function of $T_{f}$ using the equation:
	\begin{align*}
	\alpha_{0} = \frac{T_{f}}{W_{0}T_{cp}}
	\end{align*}
	\\
	2. Express $T_{1}$ as a function of $T_{f}$ using the equation:
	\begin{align*}
	T_f = T_1 + \frac{T_{1}}{Z_{1}T_{cm}}\frac{T_{f} - T_{1}}{\int_{T_{1}}^{T_{f}} \frac{1}{W_{1}(t)}dt}T_{cp}
	\end{align*}
	Express $\alpha_{1}$ as a function of $T_{f}$ using the equation:
	\begin{align*}
	\alpha_{1} = \frac{T_{1}}{Z_{1}T_{cm}}
	\end{align*}
	3. Express $T_{2}$ as a function of $T_{f}$ using the equation:
	\begin{align*}
	T_f =T_2 + \frac{T_{2} - T_{1}}{Z_{2}T_{cm}}\frac{T_{f} - T_{2}}{\int_{T_{2}}^{T_{f}} \frac{1}{W_{2}(t)}dt}T_{cp}
	\end{align*}
	where $T_{1}$ is a function of $T_{f}$\\
	Express $\alpha_{2}$ as a function of $T_{f}$ using the equation:
	\begin{align*}
	\alpha_{2} = \frac{T_{2} - T_{1}}{Z_{2}T_{cm}}
	\end{align*}
	where $T_{1}$ and $T_{2}$ are functions of $T_{f}$\\
	4.Repeat the procedure until $\alpha_{N}$ is expressed as a function of $T_{f}$. Now, every $\alpha_{i}$ has been expressed as a function of $T_{f}$. \\
	5.Apply the normalization equation:
	\begin{align*}
	\alpha_{0} + \alpha_{1} + ... + \alpha_{N} = 1
	\end{align*}
	to calculate the optimal finishing time $T_{f}$, as well as all the partitions $\alpha_{i}$s.
	\caption{Recursive algorithm to solve the optimal scheduling problem in a time-varying system with a time-invariant control processor}
\end{algorithm}
\subsection{Time-varying Control Processor, Processing and Communication Speed}
In the previous section, we studied the optimal scheduling problem for a single level tree network where the worker processors have time-varying processing speeds due to the arrival and departure of background jobs. In this section, we consider the general case that the background jobs appear on the control processor as well, which will make the processing speed time-varying for $P_{0}$. Also, we assume that there will be other transmissions such as the control processor communicating with other networks when assigning the loads, which will slow down the communication speed for the job of our interest. This will make the communication speed time-varying. Similar as the previous subsection a processor is virtualized into multiple virtual processors to share the communication power and there is a hypervisor to control them. Same as $W_i^h(n)$ and $W_i(t)$, we use $Z_i^h(n)$ and $Z_i(t)$ to represent the time-varying inverse of communication speed applied to the divisible job at interest. As a result, $Z_i(t)$ will also be a function of steps. Again, we assume that the time points when links established and finished with other networks are known for each processor, which means $Z_i(t)$ is exactly known. For simplicity we use $Z_i$ to represent the inverse of communication speed when there is only the divisible load job of our interest in the control processor for distribution.\par 
\begin{figure}[h!]
	\centering
	\begin{tikzpicture}
	\draw[->] (-4.5,3) -- (3.5,3);
	\draw (-4.5,1.8) -- (-4.5,4.4);
	\draw (-4.8,3) node {$P_{0}$};
	\draw (-4.5,3.5) -- (-4,3.5);
	\draw (-4,3.5) -- (-4,3.8);
	\draw (-4,3.8) -- (-2,3.8);
	\draw (-2,3.8) -- (-2,4.1);
	\draw (-2,4.1) -- (-0.5,4.1);
	\draw (-0.5,4.1) -- (-0.5,3.8);
	\draw (-0.5,3.8) -- (1.4,3.8);
	\draw (1.4,3.8) -- (1.4,3);
	\draw (-2.8,3.8) -- (-2.8,3);
	\draw (-3.6,3.25) node {$\alpha_{1}\bar{Z_{1}}T_{cm}$};
	\draw (-1.1,4.1) -- (-1.1,3);
	\draw (-1.9,3.25) node {$\alpha_{2}\bar{Z}_{2}T_{cm}$};
	\draw (-0.3,3.8) -- (-0.3,3);
	\draw (0.6,3.25) node {$\alpha_{N}\bar{Z}_{N}T_{cm}$};
	\draw[dashed] (-4.5,3.8) -- (-4,3.8);
	\draw[dashed] (-4.5,4.1) -- (-2,4.1);
	\draw (-4.8,3.5) node {$Z_1$};
	\draw (-5.0,3.9) node {$Z_1^h(2)$};
	\draw (-5.0,4.3) node {$Z_2^h(3)$};
	\filldraw [black] (-0.8,3.25) circle [radius=0.5pt];
	\filldraw [black] (-0.7,3.25) circle [radius=0.5pt];
	\filldraw [black] (-0.6,3.25) circle [radius=0.5pt];
	\draw (-4.5,2.5) -- (-3.5,2.5);
	\draw (-3.5,2.5) -- (-3.5,2);
	\draw (-3.5,2) -- (-2.6,2);
	\draw (-2.6,2) -- (-2.6,2.5);
	\draw (-2.6,2.5) -- (-2,2.5);
	\draw (-2,2.5) -- (-2,2);
	\draw (-2,2) -- (1,2);
	\draw (1,2) -- (1,2.5);
	\draw (1,2.5) -- (2,2.5);
	\draw (2,2.5) -- (2,2);
	\draw (2,2) -- (3,2);
	\draw (3,2) -- (3,3);
	\draw[dashed] (-4.5,2) -- (-3.5,2);
	\draw (-4.8,2.5) node {$W_{0}$};
	\draw (-5.0,2) node {$W_{0}^h(2)$};
	\draw (-0.5,2.55) node {$\alpha_{0}\bar{W}_{0}T_{cp}$};
	\draw[->] (-4.5,1.2) -- (3.5,1.2);
	\draw (-4.5,0) -- (-4.5,1.5);
	\draw (-4.8,1.2) node {$P_{1}$};
	\draw (-2.8,1.2) -- (-2.8,0.7);
	\draw (-2.8,0.7) -- (-1.6,0.7);
	\draw (-1.6,0.7) -- (-1.6,0.2);
	\draw (-1.6,0.2) -- (1,0.2);
	\draw (1,0.2) -- (1,0.7);
	\draw (1,0.7) -- (3,0.7);
	\draw (3,0.7) -- (3,1.2);
	\draw[dashed] (-4.5,0.7) -- (-2.8,0.7);
	\draw[dashed] (-4.5,0.2) -- (-1.6,0.2);
	\draw (-4.8,0.7) node {$W_{1}$};
	\draw (-5.0,0.2) node {$W_{1}^h(2)$};
	\draw (-0.4,0.75) node {$\alpha_{1}\bar{W}_{1}T_{cp}$};
	\draw[dashed] (-2.8,3) -- (-2.8,1.2);
	\draw (-3,1.4) node {$T_{1}$};
	\draw[->] (-4.5,-0.6) -- (3.5,-0.6);
	\draw (-4.5,-1.8) -- (-4.5,-0.3);
	\draw (-4.8,-0.6) node {$P_{2}$};
	\draw (-1.1,-0.6) -- (-1.1,-1.6);
	\draw (-1.1,-1.6) -- (1.6,-1.6);
	\draw (1.6,-1.6) -- (1.6,-1.1);
	\draw (1.6,-1.1) -- (3,-1.1);
	\draw (3,-1.1) -- (3,-0.6);
	\draw[dashed] (-4.5,-1.6) -- (-1.1,-1.6);
	\draw[dashed] (-4.5,-1.1) -- (-1.1,-1.1);
	\draw (-4.8,-1.1) node {$W_{2}$};
	\draw (-5.0,-1.6) node {$W_{2}^h(2)$};
	\draw (0.5,-1.05) node {$\alpha_{2}\bar{W}_{2}T_{cp}$};
	\draw[dashed] (-1.1,3) -- (-1.1,1.2);
	\draw[dashed] (-1.1,0.2) -- (-1.1,-0.6);	
	\draw (-1.3,-0.4)node {$T_{2}$};
	\filldraw [black] (-3.2,-2.0) circle [radius=0.5pt];
	\filldraw [black] (-3.2,-2.2) circle [radius=0.5pt];
	\filldraw [black] (-3.2,-2.4) circle [radius=0.5pt];
	\draw[->] (-4.5,-2.9) -- (3.5,-2.9);
	\draw (-4.5,-3.9) -- (-4.5,-2.6);
	\draw (-4.8,-2.9) node {$P_{N}$};
	\draw (1.4,-2.9) rectangle (3,-3.4);
	\draw[dashed] (-4.5,-3.4) -- (1.4,-3.4);
	\draw (-4.8,-3.4) node {$W_{N}$};
	\draw (2.2,-3.15) node {$\alpha_{N}\bar{W}_{N}T_{cp}$};
	\draw[dashed] (1.4,2.5) -- (1.4,1.2);
	\draw[dashed] (1.4,0.7) -- (1.4,-0.6);
	\draw[dashed] (1.4,-1.6) -- (1.4,-2.9);
	
	\draw(1.2,-2.7)node {$T_{N}$};
	\draw[dashed] (3,2.5) -- (3,1.2);
	\draw[dashed] (3,0.7) -- (3,-0.6);
	\draw[dashed] (3,-1.1) -- (3,-2.9);
	\draw(3,3.2)node {$T_{f}$};
	\end{tikzpicture}
	\caption{Timing diagram for single level tree network with time-varying channel speed and computing speed } 
	\label{fig:case2time}
\end{figure}
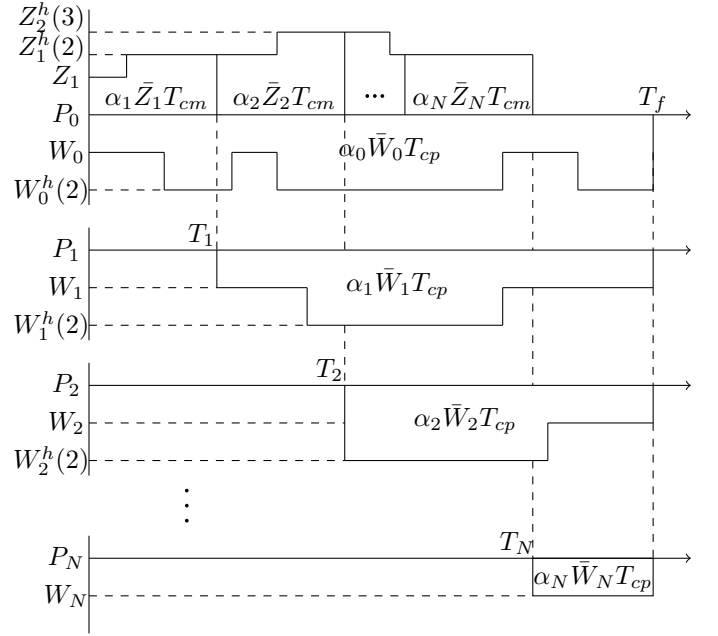
Fig. \ref{fig:case2time} demonstrates the timing diagram for a general time-varying single level tree system. The channel speed varies as well as the computing speed for each processor. At the beginning, $P_{0}$ starts to transmit partition $\alpha_{1}$ of the loads to $P_{1}$ and finishes at time $T_{1}$. After finishing receiving the loads, $P_{1}$ starts to process at the time point $T_{1}$, while $P_{0}$ starts to transmit the partition $\alpha_{2}$ to $P_{2}$. This procedure repeats for every processor, and again, every processor finishes at the same time $T_{f}$ for the optimal condition.\par
Apparently, Theorem I still works in this situation. Similarly, we can find the expression for the equivalent time-invariant value of $Z(t)$:
\begin{theorem}
	The equivalent constant value of $Z_i(t)$ when $P_{0}$ communicates with $P_{i}$ equals to:
	\[\bar{Z}_{i} = \frac{T_{i} - T_{i-1}}{\int_{T_{i-1}}^{T_{i}} \frac{1}{Z_i(t)}dt}\]
	where the $T_{i}$ denotes the $ith$ processor's start computing time, $i = 1,2,...,N$, $T_{0} = 0$. 
\end{theorem}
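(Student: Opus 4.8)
The plan is to mirror the argument used for Theorem~I almost verbatim, with two substitutions: the computing speed $W_i(t)$ is replaced by the channel speed $Z_i(t)$, and the relevant time window shifts from $[T_i, T_f]$ (during which $P_i$ computes) to $[T_{i-1}, T_i]$ (during which $P_0$ transmits the partition $\alpha_i$ to $P_i$). As the surrounding text makes clear, the control processor sends $\alpha_1$ during $[0, T_1]$ and, in general, sends $\alpha_i$ during $[T_{i-1}, T_i]$; since $Z_i(t)$ changes only by steps at the instants when competing network links are established or torn down, this transmission window is naturally partitioned into finitely many sub-intervals on each of which the channel rate is constant.

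Concretely, I would suppose there are $k$ step changes of $Z_i(t)$ strictly inside $[T_{i-1}, T_i]$, label the change instants $\tau_{i1} < \tau_{i2} < \cdots < \tau_{ik}$, and set $\tau_{i0} = T_{i-1}$ and $\tau_{i(k+1)} = T_i$ to close off the partition. Writing $Z_{ij}$ for the constant value of $Z_i(t)$ on $[\tau_{i(j-1)}, \tau_{ij}]$ and $\alpha_{ij}$ for the fraction of the full load transmitted during that sub-interval, the constant-rate relation on each piece reads $\tau_{ij} - \tau_{i(j-1)} = \alpha_{ij} Z_{ij} T_{cm}$, and the fractions sum as $\alpha_i = \sum_{j=1}^{k+1} \alpha_{ij}$. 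This is exactly the analogue of the system~(2) appearing in the proof of Theorem~I, with $T_{cm}$ in place of $T_{cp}$.

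The final step is to invoke the defining relation for the equivalent constant channel speed, namely $T_i - T_{i-1} = \alpha_i \bar{Z}_i T_{cm}$, which gives $\bar{Z}_i = (T_i - T_{i-1})/(\alpha_i T_{cm})$. Substituting $\alpha_i = \sum_j \alpha_{ij}$ and solving each piecewise relation for $\alpha_{ij} = (\tau_{ij} - \tau_{i(j-1)})/(Z_{ij} T_{cm})$ collapses the denominator to $\sum_{j=1}^{k+1} (\tau_{ij} - \tau_{i(j-1)})/Z_{ij}$. Since $Z_i(t)$ is piecewise constant with value $Z_{ij}$ on the $j$th piece, this sum is precisely $\int_{T_{i-1}}^{T_i} \tfrac{1}{Z_i(t)}\, dt$, the factors of $T_{cm}$ cancel, and the claimed identity $\bar{Z}_i = (T_i - T_{i-1})/\int_{T_{i-1}}^{T_i} \tfrac{1}{Z_i(t)}\, dt$ follows.

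There is no genuine analytical obstacle here; the entire content of the argument is in the accounting. The one point demanding care is identifying the correct integration window and normalizing constant: unlike the computation interval $[T_i, T_f]$ of Theorem~I, the transmission to $P_i$ occupies the interval $[T_{i-1}, T_i]$ bracketed by the completion of the previous transmission and the onset of $P_i$'s own computation, and the relevant scaling factor is $T_{cm}$ rather than $T_{cp}$. Once these endpoints are fixed, with the convention $T_0 = 0$ handling the first transmission, the passage from the piecewise-constant sum to the integral is identical to that of Theorem~I.
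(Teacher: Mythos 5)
Your proposal is correct and is exactly what the paper intends: the paper gives no detailed argument for Theorem~II, stating only that ``the proof should be similar to the proof of the Theorem~1,'' and your write-up is precisely that analogous argument, with the correct substitutions of $Z_i(t)$ for $W_i(t)$, the transmission window $[T_{i-1},T_i]$ (with $T_0=0$) for the computation window $[T_i,T_f]$, $T_{cm}$ for $T_{cp}$, and the defining relation $T_i-T_{i-1}=\alpha_i\bar{Z}_iT_{cm}$ in place of $T_f-T_i=\alpha_i\bar{W}_iT_{cp}$. The piecewise-constant accounting and the cancellation of $T_{cm}$ go through exactly as you describe, so there is nothing to add.
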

The proof should be similar to the proof of the Theorem 1. Also, the $\bar{Z}_{i}$ can be taken as the inverse of average communication speed, in the same manner as $\bar{W}_{i}$ in the remark of Theorem I. \par
\begin{subequations}
	\begin{align}
	\alpha_{0}\bar{W}_{0}T_{cp} = T_{f}\\
	T_{i} - T_{i-1} = \alpha_{i}\bar{Z}_{i}T_{cm}, i = 1,2,...,N\\
	T_{i} = \sum_{k=1}^{i}\alpha_{i}\bar{Z}_{i}T_{cm}, i = 1,2,...,N \\ 
	T_{f} - T_{i} = \alpha_{i}\bar{W}_{i}T_{cp}, i = 1,2,...,N\\
	\alpha_{0} + \alpha_{1} + ... + \alpha_{N} = 1
	\end{align}
\end{subequations}
Equations (6a) - (6d) demonstrate the system equations when $P_{0}$ is also time-varying. Similar to equations (4a) to (4d), equations (6a) and (6d) represent the processing part for each processor, (6b) and (6c) represent the communication part and (6e) represents the normalization equation. \\
By applying Theorem II to equation (6b):
\begin{subequations}
	\begin{align}
	T_{i} - T_{i-1} = \alpha_{i}\frac{T_{i} - T_{i-1}}{\int_{T_{i-1}}^{T_{i}} \frac{1}{Z_i(t)}dt} T_{cm}\\
	\Longrightarrow 
	\alpha_{i} = \frac{1}{T_{cm}}\int_{T_{i-1}}^{T_{i}} \frac{1}{Z_i(t)}dt
	\end{align}  
\end{subequations}
We can find that $\alpha_{i}$ is the integral of $\frac{1}{Z_i(t)}$ from $T_{i-1}$ to $T_{i}$ times a constant $T_{cm}$. Since $Z_i(t)$ is assumed to be known, by applying the same recursive method as last subsection, we can express every $\alpha$s as a function of $T_f$ and using the normalization equation to solve the optimal scheduling problem. The detailed steps are introduced in Algorithm II.
\begin{algorithm}
	1. Express $\alpha_{0}$ as a function of $T_{f}$ using the equation:
	\begin{align*}
	\alpha_{0} = \frac{1}{T_{cp}}\int_{0}^{T_{f}} \frac{1}{W_{0}(t)}dt
	\end{align*}
	\\
	2. Express $T_{1}$ as a function of $T_{f}$ using the equation:
	\begin{align*}
	T_f = T_1 + \frac{1}{T_{cm}}\int_{0}^{T_{1}}\frac{1}{Z_1(t)}dt\frac{T_{f} - T_{1}}{\int_{T_{1}}^{T_{f}} \frac{1}{W_{1}(t)}dt}T_{cp}
	\end{align*}
	Express $\alpha_{1}$ as a function of $T_{f}$ using the equation:
	\begin{align*}
	\alpha_{1} = \frac{1}{T_{cm}}\int_{0}^{T_{1}}\frac{1}{Z_1(t)}dt
	\end{align*}
	where $T_{1}$ is a function of $T_{f}$ \\
	3. Express $T_{2}$ as a function of $T_{f}$ using the equation:
	\begin{align*}
	T_f = T_2 + \frac{1}{T_{cm}}\int_{T_{1}}^{T_{2}}\frac{1}{Z_2(t)}dt\frac{T_{f} - T_{2}}{\int_{T_{2}}^{T_{f}} \frac{1}{W_{2}(t)}dt}T_{cp}
	\end{align*}
	where $T_{1}$ is a function of $T_{f}$\\
	Express $\alpha_{2}$ as a function of $T_{f}$ using the equation:
	\begin{align*}
	\alpha_{2} = \frac{1}{T_{cm}}\int_{T_{1}}^{T_{2}}\frac{1}{Z_2(t)}dt
	\end{align*}
	where $T_{2}$ and $T_{1}$ are functions of $T_{f}$\\
	4.Repeat the procedure until $\alpha_{N}$ is expressed as a function of $T_{f}$. Now, every $\alpha_{i}$ has been expressed as a function of $T_{f}$. \\
	5.Apply the normalization equation:
	\begin{align*}
	\alpha_{0} + \alpha_{1} + ... + \alpha_{N} = 1
	\end{align*}
	to calculate the optimal finishing time $T_{f}$, as well as all the partitions $\alpha_{i}$s.
	\caption{Recursive algorithm to solve the optimal scheduling problem in a time-varying single level tree system}
\end{algorithm}
\section{Stochastic Analysis}
In the previous two subsections, two recursive algorithms to solve the optimal load fraction in the time-varying system were studied. However, the assumption that the time points of arrival and departure of background jobs are known a priori may not hold for many realistic circumstances. As a result, it is necessary to perform a more general analysis where the time points of arrival and departure of background jobs remain unknown.\par
In this section, we establish a stochastic model where the time points of arrival and departure of background jobs are not exactly known. To model the system, we assume Markovian statistics for the arrival and departure processes. Similar to the nature of arriving customers, the arrivals of background jobs are modeled as a Poisson random process with parameter $\lambda$ and the stay time for each of the background job followed an negative exponential distribution with parameter $\mu$. In this way, the system can be modeled as a M/M/1 queue. In [15], the average number of customers in the M/M/1 chain is used as the average number of the background jobs in each processor. However, this method may not be accurate given that the starting state and processing time are not taken into consideration. Also, [15] assumed that the system parameters $\lambda$ and $\mu$ were known, which may also not be possible. To deal with these issues, we first perform an estimation of $\lambda$ and $\mu$ based on the previous information of the system using a fading memory window. Then a simulation-based method is introduced to solve the optimal scheduling problem. In order to simplify and accelerate, an iterative algorithm is studied to achieve much faster running time with a sacrifice of negligible precision.\par
In this section the discussion is in the context that all the processors are time-varying (section 2.3) but this algorithm can work for both cases in section 2.2 and 2.3. Also, since we assume that the other transmissions have the same effect as the background jobs, we will just focus on the background jobs (processing speed) since the results also works for the other transmissions (communication speed). The numerical tests show that our stochastic model outperforms the method in [15].
\subsection{System Parameter Estimation}
In our system we assume that for any time-varying processor, the arrivals of background jobs follows a Poisson random process with parameter $\lambda$ and the stay time for each background job follows an exponential distribution. As a result, the arrivals and departures of background jobs form a M/M/1 queuing model. To this end, let ${x_1,x_2,x_3,....,x_n}$ be the samples of background jobs' inter-arrival interval times within the fading memory window. The fading memory window contains $n$ nearest samples before the divisible load job arrives, and the samples that are closer to the end point will receive a higher weight in the estimation.  As a result, the fading memory estimation will deliver a more stable result once the parameter varies with the time, otherwise it  will be just same as the normal estimation. These $n$ samples should be independent and identically distributed with $Exp(\lambda)$. To estimate the value of $\lambda$, the weighted maximum likelihood estimation (WMLE) method is used:
\begin{align}
lik(\lambda) = \prod_{i=1}^{n}(\lambda e^{-\lambda x_i})^{\beta_i}\\
\hat{\lambda} = \argmax_\lambda log(lik(\lambda))
\end{align}
where the $\beta_1,\beta_2,...,\beta_n$ are the fading memory weights with an ascending order. By solving the WMLE, the estimate of $\lambda$ can be achieved:\\
\begin{align}
\hat{\lambda} = \frac{\sum_{i=1}^{n}\beta_{i}}{\sum_{i=1}^{n}\beta_i x_i}
\end{align}
For the estimation of $\mu$, let ${y_1,y_2,...,y_n}$ be the samples of background stay time within the fading memory window. By applying the same method, the estimate of $\mu$ can be achieved as:\\
\begin{align}
\hat{\mu} = \frac{\sum_{i=1}^{n}\alpha_{i}}{\sum_{i=1}^{n}\alpha_i y_i}
\end{align}
where the $\alpha_1,\alpha_2,...,\alpha_n$ are the fading memory weights for $\mu$.
\subsection{Stochastic Model}
To solve the optimal scheduling using the stochastic model, we first introduce a simulation-based method. We take the median of a large number of samples to approximate the real case. Then a simplified iterative algorithm is introduced to reduce running time.
\subsubsection{Simulation-based Approach}
In the case where the actual arrival and departure times of background jobs are not known, it is impossible to make accurate schedule for the system since the real $W(t)$ and $Z(t)$ can never be obtained. To this end, a proper approximation is necessary for scheduling. Since the arrivals and departures of background jobs are modeled as a M/M/1 queue, it is naturally to gather statistic information from the M/M/1 queue with proper system parameter. In [15], given the system parameter $\lambda_i$ and $\mu_i$ for $ith$ processor, the average number of background jobs $n_i$ in the M/M/1 system can be estimated by $\frac{\rho_i}{1-\rho_i}$ where the $\rho_i = \frac{\lambda_i}{\mu_i}$. Then the average inverse of the processing speed was model as $\bar{W_i} = (n_i + 1)W_i$ since every background job is assume to share the equal computing power in [15]. In this way, the schedule can be achieved by solving equations (4) or (6). However, in the real case the average number of background jobs for processor $i$ during its processing time may not simply equal to the average state for the M/M/1 model due to two reasons. First, the processor may already have some background jobs be processed at the time when the divisible load job of our interest arrives, which means the start state of the M/M/1 model is not zero. Also, the average number of background jobs of a certain processor during its processing time may depend on how much time it takes to process. The divisible load job may terminate before the the M/M/1 queue reaches its equilibrium, so the average number of background jobs may not equal to the average number in the M/M/1 queue. \par
To deal with this issue, instead simply using the average number of background jobs as an approximation, a simulation based method is introduced in this paper. The main idea is to simulate background sequence for each processor, then the deterministic algorithm I or II can be applied. By operating this simulation for abundant times, the trial which achieves statistical median of the finishing time can be taken as the final schedule. The simulation of background jobs is based on the natural properties of M/M/1 queue: the time to stay in one state is a random variable with $Exp(\lambda + \mu)$ (except for the first state, which is $Exp(\lambda)$ since there is no departure), and the probability to move to the next largest state is $p_{next}=\frac{\lambda}{\lambda + \mu}$. Given the starting state $N_0$ and system parameters $\lambda$ and $\mu$ for each processor, the details of simulating M/M/1 based background sequence is described in Algorithm III.\par
\begin{algorithm}
	\begin{algorithmic}[1]
		\Require $N_0$, $\lambda$ and $\mu$
		\Ensure  Background sequence
		\State Set $t=0$;		
		\State Set M/M/1 state equals to $N_0$ at $t=0$;
		\Comment{The state represents the number of background jobs}
		\While{$t<T_f$}
		\If{current state equals to $0$}
		\State Generate a random variable $\tilde{t} \sim Exp(\lambda)$;
		\State $t=t+\tilde{t}$, move the state to $1$
		\Else
		\State Generate a random variable $\tilde{t} \sim Exp(\lambda +\mu)$;
		\State $t=t+\tilde{t}$;
		\State Generate a random variable $p \sim Unif(0,1)$;
		\If{$p <= p_{next}$}
		\State Move the M/M/1 queue to the next state;
		\Else
		\State Move the M/M/1 queue to the previous state;
		\EndIf
		\EndIf
		\EndWhile
	\end{algorithmic}
	\caption{Algorithm to simulate the background sequence}
\end{algorithm}
The system parameters $\lambda$ and $\mu$ can be estimated by the estimation step in section 3.1. This simulation can be done beforehand and the results stored in a table for future use. Based on the background jobs sequence and the pre-defined hypervisor function to assign a processor computation/communication power, $W(t)$ and $Z(t)$ can be achieved. Then the recursive deterministic algorithm I or II can be applied to obtain a schedule. By repeating this procedure for abundant times, various schedule plans can be achieved. The trial that achieve the median of all the finishing times is chosen as the final stochastic schedule plan.
\subsubsection{Iterative Algorithm for Simplification}
One drawback of the simulation-based algorithm is that it requires to run the recursive deterministic algorithm for abundant times. This procedure may become quite time-consuming when the system scale grows large since the recursive deterministic algorithm could be quite slow when the number of processors grows large. The running time can be significantly decreased if we can solve the linear equations (4) or (6) directly. However, solving the linear equations (4) or (6) requires the prior knowledge of $\bar{W}$ and $\bar{Z}$ for each processor, which can only be accessed after scheduling based on theorem I and II.\par 
To deal with this issue, an incorrect initial guess of the scheduling is made. This initial guess can be achieved either from the time-invariant approach or the result generated by [15]. After we achieve the initial scheduling, random background sequences are generated for each processor using algorithm III. Same as in the last subsection, $W(t)$ and $Z(t)$ can be estimated. Since we already have the initial schedule, we know the starting processing time of each processor. Based on theorem I and II $\bar{W}$ and $\bar{Z}$ can be estimated for each processor. An updated schedule can be achieved from solving linear equations (4) or (6). Then the background jobs sequences are generated again for each processor, and the updated $\bar{W}$ and $\bar{Z}$ can be achieved based on the new background jobs sequences. The updated $\bar{W}$ and $\bar{Z}$ could be utilized to update the schedule again. Similar as the previous subsection, abundant iterations of this procedure are performed and the trial that achieves the median of all the finishing times is chosen as the final stochastic schedule plan. An simplified algorithm description is shown in algorithm IV.\par  
\begin{algorithm}
	1. Perform an initial scheduling. The communication and processing time for each processor can be obtained.\\
	2. Run the Algorithm III to generate random background sequences for each processor.\\  
	3. Achieve the updated $\bar{W}_i$ and $\bar{Z}_i$ for each processor $i$ based on theorem I and II.\\
	4. Updating the schedule based on the new $\bar{W}$ and $\bar{Z}$.  The updated communication and processing time for each processor can be obtained.\\
	5. Repeat step 2 to 4 for an abundant number of times.\\
	6. The trial that achieve the median of all the finishing times is chosen as the final stochastic schedule plan.\\
	\caption{Simplified Scheduling}
\end{algorithm}  
Due to that each time the $\bar{W}$ and $\bar{Z}$ are estimated from the information of the last iteration, the overall scheduling may not as accurate as the simulation based method introduced in last subsection. However, the time-saving property of this method plays an important role when the system scale grows large. Numerical tests shows that for the system with large number of processors, this simplified iterative method can save significant time with negligible errors.  
\section{Numerical Test and Evaluation}
In this section we perform numerical tests for both deterministic and stochastic models. The first two subsections illustrates our results for the deterministic model using Algorithm I and II. We simulate each of the two algorithms in 50 time units and each time unit contain 100 time slots. That is to say, each time slot is equivalent to 0.01 unit of time. Usually the total process is finished within 50 time units. In these two subsections we use a simple way to generate the number of background jobs such that it is easier to perform evaluation of the system. A certain number of background jobs are generated throughout the 50 time units. The arrivals and departures of background jobs are simulated as uniformly distributed random time points in pairs and the departure time of a certain background jobs must be later than the arrival time. For simplicity we assume that the hypervisor evenly distributes the physical processor's computation/communication power among the virtual processors, which means that $W^h_i(n) = nW_i$. As a result, $W_{i}(t)$ and $Z_i(t)$ can be obtained based on the pre-defined $W_{i}$ and $Z_i$. For Algorithm I, $Z_i(t) = Z_i$ and $W_{1}(t) = W_{1}$ are set to be constant since the control processor is not time-varying. In our test, we arrange the processors' sequence according to their speed. That is to say, the faster processors will receive load prior to the slower ones. Based on this concept, we set the inverse of communication speed $Z_i = 1 + 0.1(i-1), i = 1,2,3,...,N$ for the processor $i$. Also, the parameters are set as: $T_{cm} = 1$, $T_{cp} = 4$ throughout the whole numerical test.\par
The third subsection illustrates our results by the stochastic model. The background jobs are generated by an M/M/1 queuing model instead of the simple method. We also compare our result with the result in [15]. It shows that our stochastic result better matches the deterministic result in terms of statistics.  
\subsection{Time-varying System with A Time-invariant Control Processor}
\subsubsection{Solution and Verification}
In this subsection, the control processor is time-invariant while the work processors are all time-varying. The link speed is assumed to be time-invariant. The $W_{i}$ is set to be equal for all processors and denoted as $W$. The Algorithm I is solved by starting with an initial $T_{f}$, then changing the value of $T_{f}$ gradually until achieving a sum of all $\alpha$s that is enough equal to 1. In this case, since $\alpha_{0}$ must be smaller than $1$, by the equation (4a), $\alpha_{0} = \frac{T_{f}}{W_{0}T_{cp}}$, then $T_{f}$ must be smaller than $W_{0}T_{cp}$. So the $T_{f}$ is initialized with its upper-bound $W_{0}T_{cp}$ and is decreased by a step of a time slot to achieve the correct solution. \par
\begin{figure}[h!]
	\centering
	\includegraphics[width=0.5\textwidth]{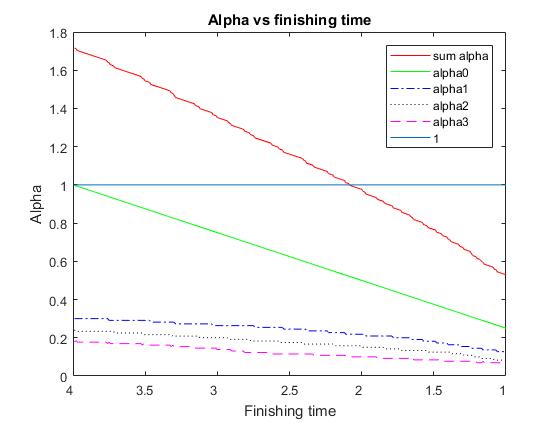}
	\caption{Finishing time vs the partitions of each processor by Algorithm I}
	\label{fig:a1_basis}
\end{figure}	
Fig. \ref{fig:a1_basis} shows how to achieve the optimal $T_f$ and all the partitions through Algorithm I. In this case, there are 3 worker processors and each worker processor has 40 background jobs for the whole 50 time units randomly generated. On average there are 0.8 background jobs each time unit for each processor. $W$ is set to be 1 and $T_{f}$ is initialized by $W_{0}T_{cp} = 4$ in this case. Our solution lays where the sum of alpha curve intersects with the line where the sum of the alphas equals to one.\par
\begin{table}[h]
	\caption{Two closest solution points for Fig. \ref{fig:a1_basis}}
	\begin{center}
		\begin{tabular}{||c c c c c c||} 
			\hline
			$T_{f}$ & $\alpha_{0}$ & $\alpha_{1}$ & $\alpha_{2}$ & $\alpha_{3}$ & sum \\ [0.5ex] 
			\hline\hline
			2.0800 & 0.5200 & 0.2182 & 0.1583 & 0.1000 & 0.9965 \\ 
			\hline
			2.0900 & 0.5225 & 0.2182 & 0.1583 & 0.1077 & 1.0067 \\ [1ex] 
			\hline
		\end{tabular}
	\end{center}
\end{table}
Table I shows the two closest solution points for Fig. \ref{fig:a1_basis}, where the sum of alphas is closest to $1$. From the solutions we can find that the sequence of divisible loads each processor takes is $\alpha_{0} > \alpha_{1} > \alpha_{2} > \alpha_{3}$, that is because the processor with the smaller index finishes communication before the one with larger index. In other words, the processor with smaller index has more time to process the loads. However in general the inequality part $\alpha_{1} > \alpha_{2} > \alpha_{3}$ does not always hold. Since the background jobs are generated randomly over the whole time interval, so it is possible that the processor with a larger index has less background jobs than the processor with smaller index during the processing time. Taking less background jobs means processing in a higher average speed. As a result, even with less time to process the loads, the processor with a larger index is possible to take more loads due to its fast speed. Especially, processor $0$ (control processor) would always take the majority part of the loads since it does not require communication and it always has a higher processing speed than other processors, because there is no background job on $P_{0}$. Either one of these two points can be taken as the solution of Algorithm I, one can also average these 2 points to achieve the solution. \par
To verify the accuracy of Algorithm I, we use Algorithm I to solve a time-invariant case where there is no background job and compare the result with the solution generated from equations (1a) to (1c) using the same parameters mentioned before.\par
\begin{table}[h]
	\caption{Solutions of equation (1)}
	\begin{center}
		\begin{tabular}{||c c c c c c||} 
			\hline
			$T_{f}$ & $\alpha_{0}$ & $\alpha_{1}$ & $\alpha_{2}$ & $\alpha_{3}$ & sum \\ [0.5ex] 
			\hline\hline
			1.4070 & 0.3517 & 0.2759 & 0.2122 & 0.1602 & 1.0000 \\ 
			\hline
		\end{tabular}
	\end{center}
\end{table}
\begin{table}[h]
	\caption{Closest solution point by Algorithm I without background job}
	\begin{center}
		\begin{tabular}{||c c c c c c||} 
			\hline
			$T_{f}$ & $\alpha_{0}$ & $\alpha_{1}$ & $\alpha_{2}$ & $\alpha_{3}$ & sum \\ [0.5ex] 
			\hline\hline
			1.4100 & 0.3525 & 0.2755 & 0.2117 & 0.1600 & 0.9996 \\ 
			\hline
		\end{tabular}
	\end{center}
\end{table}
Table II shows the solution of equation (1a) to (1c) while table III is the closest point by Algorithm I. One can find that the solution by Algorithm I matches the solution of equation (1a) to (1c) well.
\subsubsection{System Evaluation}
Two criteria are used to evaluate the time-varying system with a time-invariant control processor: finishing time and speedup. We will see how the system performs via these two criteria with a changing number of processors and background jobs. When the number of processors is changing, the number of background jobs is set to be 40 for each worker processor throughout 50 time units and when the number of background jobs is changing, the number of processors is set to be 4 (including the control processor). The definition of speedup will be introduced in the latter part of this subsection. In this subsection for each certain number of background jobs or processors, we run the Algorithm I 1000 times and average these trials to get a stable result.\par
\begin{figure}[h!]
	\centering
	\begin{subfigure}[b]{0.5\textwidth}
		\includegraphics[width=1\linewidth]{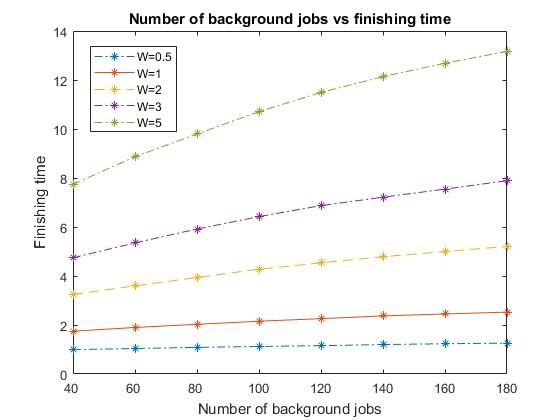}
		\caption{}
		\label{fig:a1_1} 
	\end{subfigure}
	\begin{subfigure}[b]{0.5\textwidth}
		\includegraphics[width=1\linewidth]{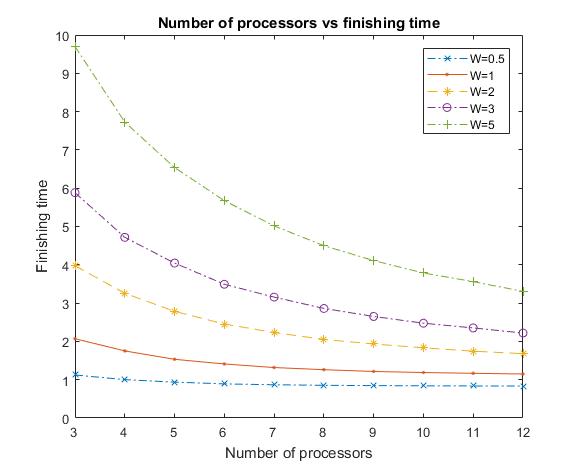}
		\caption{}
		\label{fig:a1_2}
	\end{subfigure}
	\caption[]{For a time varying system with time-invariant control processor (a) Number of background jobs vs finishing time (b) Number of processors vs finishing time.}
\end{figure}
Fig. \ref{fig:a1_1} shows how the finishing time varies with a increasing number of background jobs. One can find that the finishing time increases as the number of background jobs increases, which makes sense since more background jobs means less allocated to to the main job for a certain processor. Also, recall that all processors share a same inverse of processing speed $W$ when there is no background job, it is obvious that a higher $W$ (means lower speed) will make the system finish the job slower. This is also shown in the both Fig. \ref{fig:a1_1}  and Fig. \ref{fig:a1_2}, where higher $W$ has higher finishing time. Fig. \ref{fig:a1_2} shows the number of processors vs finishing time when each worker processor has 40 background jobs in total. With more processors sharing the same amount of job, the job should be finished faster, as shown in Fig. \ref{fig:a1_2}.\par
Since parallelism can accelerate the processing, one may wonder how much faster the parallel system can be compared with the sequential system. Defined by the well known Amdahl's law [21,22], speedup is the ratio of sequential processing time to parallel processing time for the same amount of load, which is:
\begin{align}
Speedup = \frac{T_{fs}}{T_{fp}}
\end{align}
Where $T_{fs}$ is the finishing time with a single processor while $T_{fp}$ is the finishing time with multiple parallel processors. The speedup can reflect how much faster the parallel system is compared with the sequential system. By taking the control processor as the single sequential processor, $T_{fs}$ can be achieved by:
\begin{align}
T_{fs} = 1WT_{cp} = WT_{cp}
\end{align}
As defined in equation (12) and (13), the speedup should have a positive correlation with the number of processor. Increasing of the number of processors means an increase of the parallelism in the system, which will result in a higher speedup value. \par
\begin{figure}[h!]
	\centering
	\includegraphics[width=1\linewidth]{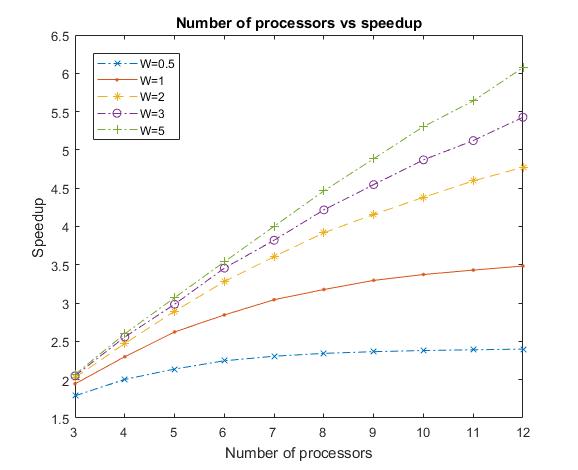}
	\caption[]{ For a time varying system with time-invariant control processor: number of processors vs speedup.}
	\label{fig:a1_4}
\end{figure}
To verify our expectations, Fig. \ref{fig:a1_4} demonstrates how speedup varies with the number of processors. This matches our expectation. For the relationship between $W$ and speedup, from the figure we can find that the higher $W$ results in higher speedup. This is because the $T_{fs}$ is linear to $W$, which should be more sensitive than $T_{fp}$ to $W$. In other words, $T_{fs}$ changes faster than $T_{fp}$ when $W$ changes. Then, for a certain number of processors, a higher $W$ will result in a higher speedup. In other words, parallelism has a bigger benefit for the slower system.
\subsection{Time-varying System with Time-varying Control Processor, Processing and Communication Speed}
\subsubsection{Solutions and Verification}
In this subsection, there are background jobs at the control processors as well. Furthermore, there will be interfering communications, which will make the control processor have both time-varying processing speed and communication speed. The number of extra connections in control processor is set to be equal to the number of background jobs in this processor for the whole time interval. $Z(t)$ is generated in the same manner as $W(t)$ described at the beginning of this section.\par
\begin{figure}[h!]
	\centering
	\includegraphics[width=0.5\textwidth]{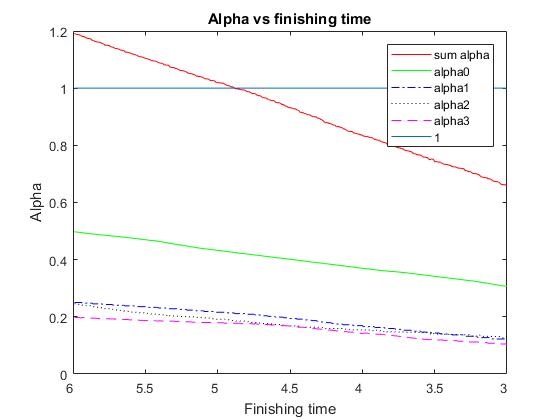}
	\caption{Finishing time vs the partitions of each processor by Algorithm II}
	\label{fig:a2_basis}
\end{figure}	
\begin{table}[h]
	\caption{Two closest solution points for Fig. \ref{fig:a2_basis}}
	\begin{center}
		\begin{tabular}{||c c c c c c||} 
			\hline
			$T_{f}$ & $\alpha_{0}$ & $\alpha_{1}$ & $\alpha_{2}$ & $\alpha_{3}$ & sum \\ [0.5ex] 
			\hline\hline
			4.8900 & 0.4244 & 0.2113 & 0.1849 & 0.1774 & 0.9980\\ 
			\hline
			4.9000 & 0.4257 & 0.2137 & 0.1866 & 0.1777 & 1.0037\\ [1ex] 
			\hline
		\end{tabular}
	\end{center}
\end{table}
Similar to the previous subsection, Fig. \ref{fig:a2_basis} and Table IV shows how to get the solution using Algorithm II. There are one control processor and three worker processors and each processor has 40 background jobs. The control processor also has 40 other incoming and outcoming network connections. Again, all processors share the same processing speed when there is no background job, as $W=1$ for all.\par
\begin{table}[h]
	\caption{Closest solution point by Algorithm II without background job}
	\begin{center}
		\begin{tabular}{||c c c c c c||} 
			\hline
			$T_{f}$ & $\alpha_{0}$ & $\alpha_{1}$ & $\alpha_{2}$ & $\alpha_{3}$ & sum \\ [0.5ex] 
			\hline\hline
			1.4110 & 0.3528 & 0.2755 & 0.2117 & 0.1600 & 0.9999 \\ 
			\hline
		\end{tabular}
	\end{center}
\end{table}
The same method as the previous subsection is applied to verify our program. The result is shown in Table V. Again our solution matches the solution in Table II. 
\subsubsection{System Evaluation}
The same two criteria: finishing time and speedup are used to evaluate the time-varying system with time-varying control processor, processing and communication speed. Again we change the number of processors and background jobs to see how the system performs. The number of background jobs is set to be 40 and the number of processors is set to be four (one control processor and three worker processors) when the other one is changing. Algorithm II is also averaged over 1000 trails for a stable result.\par
Fig. \ref{fig:a2_12} shows how the finishing time varies with a increasing number of processors and background jobs for three different $W$ values. As the previous subsection, finishing time has a positive correlation with the number of background jobs but negative correlation with the number of processors for the same reason. \par
\begin{figure}[!h]
	\centering
	\begin{subfigure}[b]{0.5\textwidth}
		\includegraphics[width=1\linewidth]{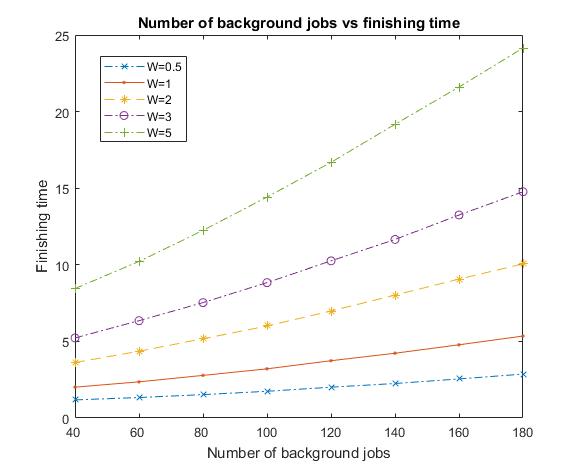}
		\caption{}
		\label{fig:a2_1} 
	\end{subfigure}
	\begin{subfigure}[b]{0.5\textwidth}
		\includegraphics[width=1\linewidth]{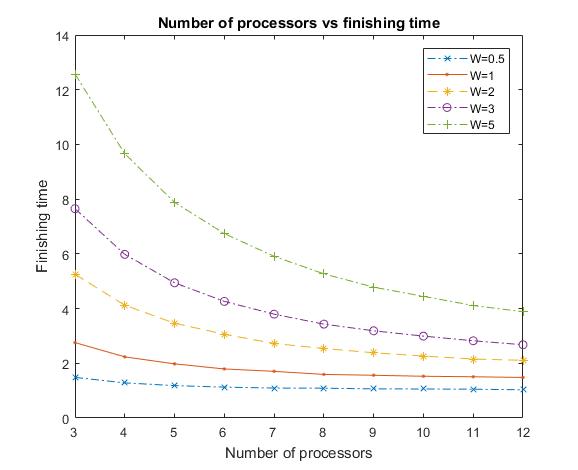}
		\caption{}
		\label{fig:a2_2}
	\end{subfigure}
	\caption[]{For a time-varying system with time-varying control processor (a) Number of background jobs vs finishing time (b) Number of processors vs finishing time.}
	\label{fig:a2_12}
\end{figure}
In case of Speedup, it is more complicated since now our reference single sequential processor $P_{0}$ is also time-varying. Equation (12) is still used to define Speedup, but equation (13) cannot achieve $T_{fs}$ for this case. To solve this problem, by taking $\alpha_{0}$ as $1$ in equation (6a), $T_{fs}$ can be obtained by solving the following equation:
\begin{IEEEeqnarray}{rCl}
	T_{fs} = 1\bar{W}_{0}(t)T_{cp} = \frac{T_{fs}}{\int_{0}^{T_{fs}}\frac{1}{W_{0}(t)}dt}T_{cp} \nonumber\\
	\Longrightarrow
	\frac{1}{T_{cp}}\int_{0}^{T_{fs}}\frac{1}{W_{0}(t)}dt = 1
\end{IEEEeqnarray}
Fig. \ref{fig:a2_4} demonstrates the relationship between speedup and the number of processors.  One can see that speedup will increase as the number of processors increases. This is similar to Fig. \ref{fig:a1_4} and also meets our expectation. 
\begin{figure}[h!]
	\centering
	\includegraphics[width=1\linewidth]{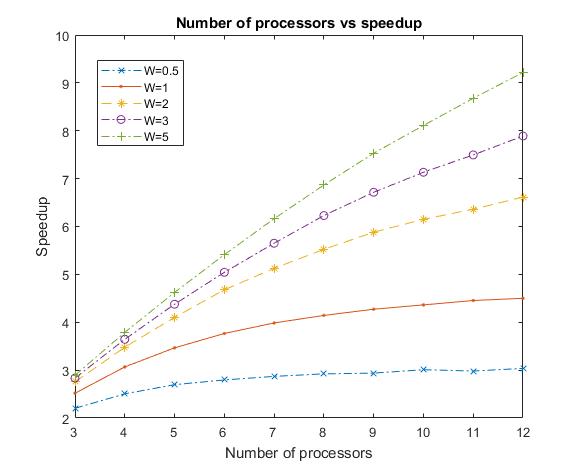}
	\caption[]{For a time varying system with time-varying control processor: number of processors vs speedup.}
	\label{fig:a2_4}
\end{figure}
\subsection{Stochastic Model}
In this subsection the background jobs are generated by a M/M/1 model. The generation is similar to the method described in section 3.2. The starting state for each processor is taken to be zero for simplicity. Both cases where the control processor is time-varying or time-invariant are tested. In the test, we call the result generated by the simulation-based method as the \enquote{simulation-based}. We also note the result provided by [15] as \enquote{before correction} and our simplified iterative Algorithm as \enquote{iterative}. A result for 4 processors (one control processor and three worker processors) are shown in Fig. \ref{fig:st1}.\par
\begin{figure}[!h]
	\centering
	\begin{subfigure}[b]{0.5\textwidth}
		\includegraphics[width=1\linewidth]{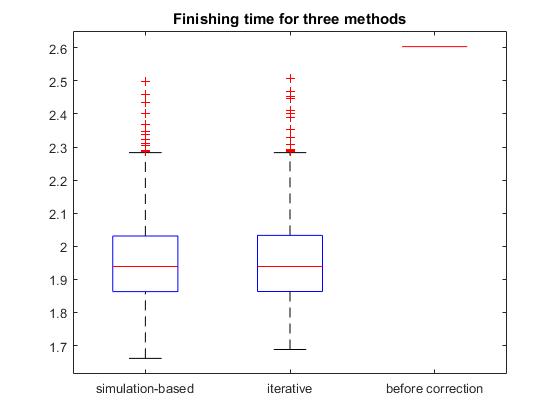}
		\caption{}
		\label{fig:Ng9} 
	\end{subfigure}
	\begin{subfigure}[b]{0.5\textwidth}
		\includegraphics[width=1\linewidth]{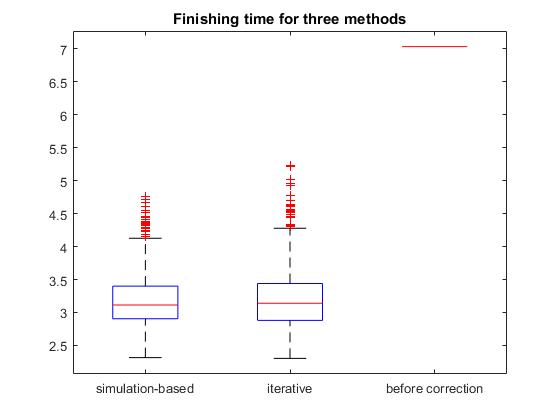}
		\caption{}
		\label{fig:Ng10}
	\end{subfigure}
	\caption[]{Result for 4 processors (a) Time-invariant control processor (b) Time-varying control processor, processing and communication speed.}
	\label{fig:st1}
\end{figure}
In Fig. \ref{fig:st1} we use box-plotted finishing time as the criterion to compare the three methods. The details of box plot can be found at [26]. Briefly speaking the box contains 50\% of the data, whose lower and upper boundary lines are at the 25\%/75\% quantile of the data. A central line indicates the median of the data and some outliers of data are plotted out as dots. The median of each data is picked as the stochastic solution. In this case, the starting states for all processors are set to be zero homogeneously. Here $\lambda$ is set to be 0.1 and $\mu$ is set to be 0.125. Based on these settings, the scheduling will be finished before the M/M/1 queue reaches its average state number in general. As a result, \enquote{before correction} method will deliver a higher finishing time since that \enquote{before correction} method will get an incorrect higher number of background jobs which will results in a slower processing speed in general. From the figure we can find that our \enquote{iterative} method delivers similar result as the \enquote{deterministic} result in statistics, which is lower than the \enquote{before correction} method for both cases whether the control processor is time-varying or not. In this case there is only 4 processors and the times to run the two algorithms are quite close. The simulation-based method turns out to be a better solution than the simplified iterative one since it is more accurate. Another case with more processors in shown in Fig. \ref{fig:st2}. \par
\begin{figure}[!h]
	\centering
	\begin{subfigure}[b]{0.5\textwidth}
		\includegraphics[width=1\linewidth]{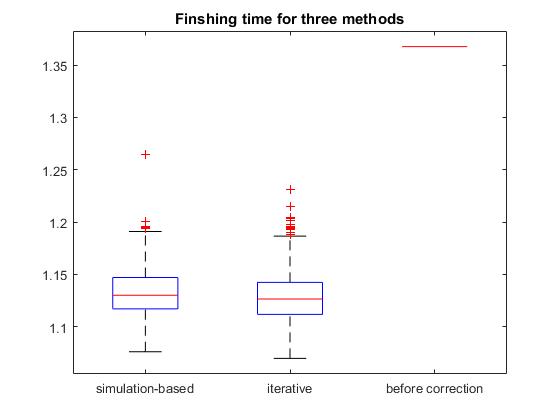}
		\caption{}
		\label{fig:s30_1} 
	\end{subfigure}
	\begin{subfigure}[b]{0.5\textwidth}
		\includegraphics[width=1\linewidth]{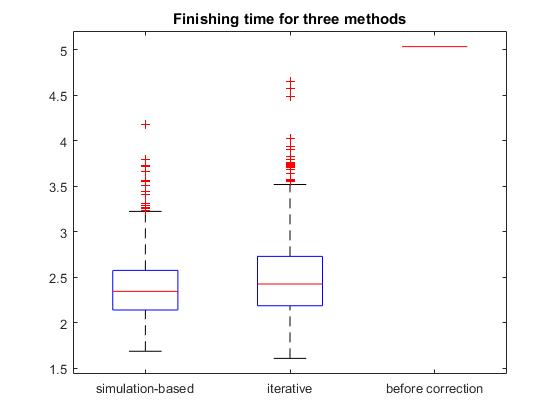}
		\caption{}
		\label{fig:s30_2} 
	\end{subfigure}
	\caption[]{Result for 15 processors (a)Time-invariant control processor (b)Time-varying control processor, processing and communication speed.}
	\label{fig:st2}
\end{figure}
Fig. \ref{fig:st2} demonstrates the result for 15 processors (one control processor and 14 worker processors). The performance of the simplified iterative algorithm is similar as the case when there is only 4 processors. However in this case, running the simulated based algorithm is overwhelmingly time-consuming (2213 seconds for Fig. \ref{fig:s30_1} and 2418 seconds for Fig. \ref{fig:s30_2}) while the simplified iterative algorithm can be time-saving (182 seconds for Fig. \ref{fig:s30_1} and 267 seconds for Fig. \ref{fig:s30_2}).\par 
One thing to note is that once the system parameters $\lambda$ and $\mu$ are fixed, the only factor that influences the stochastic model are the starting state of each processor. This is totally different from the deterministic case, which is dependent on the real distribution of the background jobs on each trial.
\section{Conclusion}
This paper studied optimal divisible loads scheduling of time-varying single level tree network. The time-varying processing speed and channel speed were transformed into equivalent time-invariant ones. The deterministic analysis was first studied where the arrival and departure times are known To achieve the optimal partition for each processor, two recursive algorithms were developed in case whether the control processor is time-invariant or time-varying. For stochastic analysis, the arrival and departure of background jobs are modeled as a M/M/1 queuing model and two algorithms are provided to solve the scheduling problem. Extensive numerical tests were performed to demonstrate the relationships between finishing time, speedup, background job number and processor number.\par
Future enhancement for this research can be pursued under the context of various network topologies such as multi-level tree or mesh. Also, the system model can be extended to handle more complicated cases, such as a general distribution of the arrivals and departures of background jobs/transmissions in stochastic analysis. 

\ifCLASSOPTIONcaptionsoff
  \newpage
\fi




\begin{thebibliography}{1}
\bibitem{1}
M. Drozdowski, \emph{Scheduling for Parallel Processing, Computer Communications and Networks}. Computer Communications and Networks, Springer-Verlag London Limited 2009.
\bibitem{2}
Robertazzi T.G.  \emph{Ten Reasons to Use Divisible Load Theory}, IEEE Computer, 2003, pp.63-68 36(5).
\bibitem{3}
V. Bharadwaj et al., \emph{Scheduling Divisible Loads in Parallel and Distributed Systems}, IEEE CS Press, 1996.
\bibitem{4}
Cheng, Y.C. and Robertazzi, T.G., \emph{Distributed Computation with Communication Delays}, IEEE Transactions on Aerospace and Electronic Systems, 24(6), 1988, 700-712.
\bibitem{5}
T.G. Robertazzi, \emph{Processor Equivalence for Daisy Chain Load Sharing Processors}, IEEE Trans. Aerospace and Electronic Systems, vol. 29, no. 4, 1993, pp. 1216-1221.
\bibitem{6}
Sohn, J. and Robertazzi, T.G., \emph{Optimal Load Sharing for a Divisible Job on a Bus Network}, IEEE Transactions on Aerospace \& Electronic Systems Vol. 32, No. 1, Jan. 1996, pp. 34-40.
\bibitem{7}
Kim, H.J., Jee, G.-I. and Lee, J.G., \emph{Optimal Load Distribution for Tree Network Processors}, IEEE Transactions on Aerospace and Electronic Systems, Vol. 32, No. 2, April 1996, pp. 607-612.
\bibitem{8}
Sohn, J., Robertazzi, T.G. and Luryi, S., \emph{Optimizing Computing Costs using Divisible Load Analysis}, IEEE Transactions on Parallel and Distributed Systems, Vol. 9, No. 3, March 1998, pp. 225-234.
\bibitem{9}
H.J. Kim, V. Mani, \emph{Divisible load scheduling in single-level tree networks: Optimal sequencing and arrangement in the nonblocking mode of communication}, Computers \& Mathematics with Applications, Volume 46, Issue 10, 2003, Pages 1611-1623.
\bibitem{10}
Bharadwaj, V., Ghose, D. and Mani, V., \emph{Optimal Sequencing and Arrangement in Distributed Single-Level Tree Networks with Communication Delays}, IEEE Transactions on Parallel and Distributed Systems, Vol. 5, No. 9, pp. Sept. 1994, pp. 968-976.
\bibitem{11}
Li, X., Bharadwaj, V. and Ko, C.C., \emph{Optimal Divisible Task Scheduling on Single-Level Tree Networks with Finite Size Buffers}, Accepted for publication in IEEE Transactions on Aerospace and Electronic Systems, February 2000.
\bibitem{12}
Li, X., Bharadwaj, V. and Ko, C.C., \emph{Divisible Load Scheduling on Single Level Tree Networks with Buffer Constraints}, IEEE Transactions on Aerospace and Electronic Systems. vol. 36, no. 4, Oct. 2000, pp. 1298-1308.
\bibitem{13}
Beaumont, O., Casanova, H., Legrand, A., Robert, Y. and Yang, Y., \emph{Scheduling Divisible Loads on Star and Tree Networks: Results and Open Problems}, IEEE Transactions on Parallel and Distributed Systems, vol. 16, no. 3, March 2005, pp. 207-218.
\bibitem{14}
Xiaolin, L. and Veeravalli, B.,  \emph{A Processor-Set Partitioning and Data Distribution Algorithm for Handling Divisible Loads from Multiple Sites in Single-Level Tree Networks}, Cluster Comput (2010) 13: 31–46.
\bibitem{15}
J. Sohn and T. G. Robertazzi, \emph{Optimal time-varying load sharing for divisible loads}, IEEE Transactions on Aerospace and Electronic Systems, vol. 34, no. 3, pp. 907-923, Jul 1998.
\bibitem{16}
R. Agrawal and H. V. Jagadish, \emph{Partitioning techniques for large-grained parallelism," in IEEE Transactions on Computers}, vol. 37, no. 12, pp. 1627-1634, Dec 1988.
\bibitem{17}
S. Bataineh and T.G. Robertazzi. \emph{Bus-oriented load sharing for a network of sensor driven processors}. IEEE Transactions on Systems, Man, and Cybernetics, 21(5):1202–1205, 1991.
\bibitem{18}
V. Bharadwaj, D. Ghose, V. Mani, and T.G. Robertazzi. \emph{Scheduling Divisible Loads in Parallel and Distributed Systems}. IEEE Computer Society, Los Alamitos, CA, 1996.
\bibitem{19}
J. Blazewicz and M. Drozdowski.\emph{ Distributed processing of divisible jobs with communication startup costs}. Discrete Applied Mathematics, 76(1–3):21–41, 1997.
\bibitem{20}
J. Sohn and T.G. Robertazzi. \emph{Optimal divisible job load sharing for bus networks}. IEEE
Transactions on Aerospace and Electronic Systems, 32(1):34–40, 1996.
\bibitem{21}
G. Amdahl. \emph{Validity of the Single Processor Approach to Achieving Large Scale Computing Capabilities}. AFIPS Conference Proceedings, 30(8):483–485, 1967.
\bibitem{22}
M. D. Hill and M. R. Marty. \emph{Amdahl’s Law in the Multicore Era}. IEEE Computer, 41(7):33–38, 2008.
\bibitem{23}
J. Berlińska, \emph{Scheduling data gathering with variable communication speed}, Proceedings of the First International Workshop on Dynamic Scheduling Problems, Poznań 2016, 29-32.
\bibitem{24}
E. Uchiteleva, A. Shami and A. Refaey, \emph{Virtualization of Wireless Sensor Networks Through MAC Layer Resource Scheduling}, in IEEE Sensors Journal, vol. 17, no. 5, pp. 1562-1576, March1, 1 2017.
\bibitem{25}
J. Ordonez-Lucena, P. Ameigeiras, D. Lopez, J. J. Ramos-Munoz, J. Lorca and J. Folgueira, \emph{Network Slicing for 5G with SDN/NFV: Concepts, Architectures, and Challenges}, in IEEE Communications Magazine, vol. 55, no. 5, pp. 80-87, May 2017.
\bibitem{26}
McGill, R., J. W. Tukey, and W. A. Larsen. \emph{Variations of Boxplots.}, in The American Statistician, Vol. 32, No. 1, 1978, pp. 12–16.
\end{thebibliography}
%

\begin{IEEEbiography}[{\includegraphics[width=1in,height=1.25in,clip,keepaspectratio]{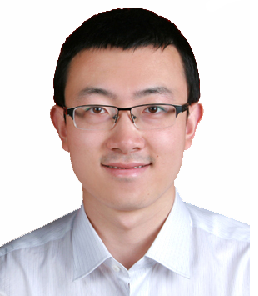}}]{Fei Wu}
received the BE degree in information and telecommunication engineering from Xi'an Jiaotong University, Xi'an, China, in 2012, and the MS degree in electrical engineering from Stony Brook University, Stony Brook, New York, in 2013. He is currently working toward the PhD degree in electrical engineering at Stony Brook University. His research interests include scheduling, parallel processing, computer networks and virtualization. 
\end{IEEEbiography}
\begin{IEEEbiography}[{\includegraphics[width=1in,height=1.25in,clip,keepaspectratio]{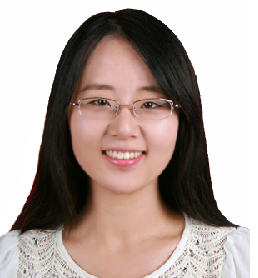}}]{Yang Cao}
received the BE degree in Electrical Engineering and Automation from Northwestern Polytechnical University, Xi'an, China, in June 2012. She also received MS degree in Electrical Engineering from Stony Brook University, Stony Brook, New York, in December 2013. Currently she is working toward the PhD degree in Electrical Engineering at Stony Brook University. Her research interests include task scheduling and resource allocation in distributed systems, cloud networks,  data centers, etc.
\end{IEEEbiography}
\begin{IEEEbiography}[{\includegraphics[width=1in,height=1.25in,clip,keepaspectratio]{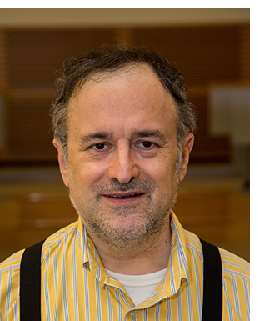}}]{Thomas G. Robertazzi}
received the BEE degree from Cooper Union, New York, in 1977 and the PhD degree from Princeton University, Princeton, New Jersey, in 1981. He is presently a professor in the Department of Electrical and Computer Engineering, Stony Brook University, Stony Brook, New York. He has published extensively in the areas of parallel processing scheduling, telecommunications and performance evaluation. He has also authored, co-authored or edited six books in the areas of networking, performance evaluation, scheduling and network planning. He is a fellow of the IEEE and since 2008 co-chair of the Stony Brook University Senate Research Committee.
\end{IEEEbiography}




\end{document}